\newtheorem{ntheorem}{Theorem}[section]
\newtheorem{definition}[ntheorem]{Definition}
\title{Learning Based Methods for Traffic Matrix Estimation from Link Measurements}
\author[1]{Shenghe Xu}
\author[2]{Murali Kodialam}
\author[2]{T.V. Lakshman}
\author[1]{Shivendra Panwar}
\affil[1]{NYU Tandon School of Engineering}
\affil[2]{Nokia Bell Labs}
\begin{document}


\maketitle
\author{}

\begin{abstract}
Network traffic demand matrix is a critical input for capacity planning, anomaly detection and many other network management related tasks. The demand matrix is often computed from link load measurements. The traffic matrix (TM) estimation problem is the determination of the traffic demand matrix from link load measurements. The relationship between the link loads and the traffic matrix that generated the link load can be modeled as an under-determined linear system and has multiple feasible solutions. Therefore, prior knowledge of the traffic demand pattern has to be used in order to find a potentially feasible demand matrix. In this paper, we consider the TM estimation problem where we have information about the distribution of the demand sizes. This information can be obtained from the analysis of a few traffic matrices measured in the past or from operator experience. We develop an iterative projection based algorithm for the solution of this problem. If large number of past traffic matrices are accessible, we propose a Generative Adversarial Network (GAN) based approach for solving the problem. We compare the strengths of the two approaches and evaluate their performance for several networks using varying amounts of past data.
\end{abstract}



\section{Introduction}
The amount of traffic incident on a network is usually captured in the form 
of a traffic matrix (TM). A TM consists of the amount of traffic between 
each node pair in a network. Knowledge of the traffic matrix is essential to solving networking problems including link capacity planning, routing path design and network anomaly detection.  However, it is not easy for a
network operator to directly measure the point to point traffic in a network.
The most commonly used method to estimate the traffic matrix is to use link 
load measurements to infer the traffic matrix. The amount of traffic on a link is relatively easy to measure or estimate using traffic monitoring 
mechanisms like NetFlow.  

In a network with $n$ nodes, the size of the traffic matrix is $O(n^2)$ whereas the number of links in the network is typically $O(n)$. Therefore,
the problem of determining a traffic matrix from link load measurements is
deriving a solution to an under-determined system of linear
equations. This system has an infinite number of solutions even if we 
restrict the solutions to be non-negative. Therefore some additional information has to be used to restrict the solution space to this system and 
obtain a single traffic matrix. This additional information or extra knowledge typically takes the form of assuming some spatial or temporal correlations about the entries in the traffic matrices. We give two examples of these assumptions, one spatial and one temporal.
\begin{itemize}
    \item {\em Gravity Models} where a weight is associated with each node in the network and the amount of traffic between two nodes is proportional to the product of the weights. This reduces the dimension of the search space from $O(n^2)$ to $O(n)$ (the weight associated with each node).
    \item{\em Proportional Splitting} where it is assumed that the traffic
    from a given node is split proportionally to different destinations and these proportions are time invariant. Though there are still $O(n^2)$
    parameters, data can be collected across $n$ time periods and this data can be jointly used to solve for the proportions
\end{itemize}

Another class of assumptions is traffic sparsity in certain transform domain.
See \cite{vardi1996network,barford2002signal,lakhina2004structural,zhang2003fast,zhang2005estimating, zhang2009spatio, gursun2012traffic, chen2014robust,xie2015sequential,xie2016accurate, xie2017accurate} for examples of different assumptions for deriving a unique traffic matrix from link measurements. 

In this paper, we consider restrictions on the traffic matrix estimation problem that arises from traffic matrix observations. If the operator has measured 
a few traffic matrices on the network of interest or some similar network,
then it is reasonable to restrict the estimated traffic matrix to have properties similar to the measured or observed traffic matrices. 

\subsubsection*{\bf Distribution Constraint}
It has been observed in practice \cite{fukuda2008towards, downey2001evidence, chandrasekaran2009survey} that the point-to-point traffic in a network
is generally not uniform. There are some large traffic source-destination pairs, and several low traffic source-destination pairs. Modeling the demand size variation as a distribution, the objective of the traffic matrix estimation problem is to determine a traffic matrix that achieves the measured link load and follows the given distribution.
\subsubsection*{\bf Similarity Constraint}
More generally, if we are given a previously observed set of traffic matrices, the objective of the traffic matrix estimation problem is to derive a traffic matrix that achieves the given link load and is "similar" to the previously observed traffic matrices. In this case, it is possible to capture more complex spatial correlations between different source-destination pairs in the traffic matrix. 
The problem of determining a solution to a under-determined linear system has been studied in the signal processing literature \cite{donoho2006compressed}. One way of getting unique recovery is to assume sparsity and 
the objective is to determine a solution to the linear system with the minimum number of non-zero 
components or a solution that minimizes the $L_1$-norm.
More recently, there has been work to construct a solution to a linear system that is close to the range space of a generative model\cite{bora2017compressed}. The generative model can be specified by a
Generative Adversarial Network (GAN)\cite{goodfellow2014generative, gulrajani2017improved, arjovsky2017wasserstein} or a Variable Autoencoder \cite{kingma2013auto}. We make use of these new approaches to derive solutions to the 
TM estimation problem.
\subsection{Our Contributions}
In this paper we propose two methods to solve the TM estimation problem that takes into account the structure of the traffic matrix.
\begin{itemize}
    \item In the case where one or a few prior traffic matrices are available, we develop an iterative projection based method to find a solution to the system $\bm{Ax=b}$ where the solution $\bm{x}$ satisfies an empirical distribution that is derived from the prior traffic matrices. To our knowledge, this is the first work that determines the solution of an under-determined system where the solution has to satisfy a distribution constraint.
    \item For the case where there are many prior traffic matrices, we develop a GAN based approach that "learns" these characteristics of these traffic matrices and then derives a solution to the system that is "similar" to the previously observed traffic matrices. 
\end{itemize}
The rest of the paper is organized as follows. Section \ref{Sec:related} briefly summarizes related work. In Section \ref{Sec:problem} we formulate the problem. The projection based method is proposed in Section \ref{Sec:projection}. In Section \ref{Sec:GAN} and Section \ref{Sec:GAN_method} we introduce the GAN based TM estimation method. Experiment setup is included in Section \ref{Sec:experiment}. The performance of the methods is evaluated in Section \ref{Sec:performance}. In Section \ref{Sec:conclusion} we draw the conclusions and propose directions for future work. 
\section{Related Work}\label{Sec:related}
Traffic matrix estimation also called network tomography is an extremely important first step for several network design and network management problems. This problem has been studied extensively under different assumptions about traffic demand information and estimation.  
An example of research exploiting temporal correlation to estimate the TM is \cite{vardi1996network}, where it is assumed that 
the traffic demands over time follow Poisson distribution and this information is used to derive a traffic matrix. 
Several papers \cite{lakhina2004structural, zhang2003fast, zhang2005estimating, barford2002signal} consider using spatial characteristics of the TMs to improve the recovery results. Zhang et al. \cite{zhang2003fast} proposed gravity models to solve the problem of network tomography.  In \cite{zhang2005estimating}, the authors proposed an information-theoretic method for network tomography. 
Later works \cite{zhang2009spatio,xie2016accurate,xie2017accurate} consider using both spatial and temporal information for better recovery results.  A compressive sensing based method called Sparsity Regularized Singular Value Decomposition (SRSVD) was introduced in 
\cite{zhang2009spatio}. In addition to link measurements, measurements of demands between some of the origins and destinations are assumed to be available.  Measurements of previous time slots are also used to improve estimation accuracy. The SRSVD utilizes sparsity of TMs in transform domain for recovery. There is also additional literature 
\cite{gursun2012traffic, chen2014robust, xie2015sequential} that utilizes low rank or sparse characteristics of TMs for TM completion. 
Instead of forming sampled TMs into a 2D matrix, \cite{xie2016accurate, xie2017accurate} proposed to form TMs directly into 3D tensors. In this way the periodicity of certain traffic demand features can also be utilized by tensor completion methods for traffic demand estimation. 
More recently, deep neural networks (DNNs) \cite{lecun2015deep, hinton2006reducing} have achieved some of the state-of-the-art results in areas including image inpainting \cite{xie2012image} and image compressive sensing \cite{bora2017compressed}. Since TM estimation is also a similar problem, neural networks have also been used in this area. In \cite{nie2016traffic} the authors proposed to use DNN for traffic matrix completion. In \cite{zhao2019spatiotemporal} the authors proposed to use neural networks including DNN, convolutional neural networks \cite{lecun2015deep} and long short-term memory \cite{hochreiter1997long} with wavelet decomposition \cite{pati1993orthogonal} for TM prediction. 
All of these methods utilize certain spatial or temporal correlation in the traffic demands to obtain suitable estimates of TMs. The main contribution of the paper is the problem of TM estimation when the only information available is the distribution of the demand sizes. To our knowledge, this problem has not been addressed in the literature and therefore none of the techniques developed in the literature can be used for this problem. If data is sufficient, the GAN based method can also capture spatial correlations in the TMs for better reconstruction results. 

\section{Problem Definition}\label{Sec:problem}
Assume that the network is represented as a directed capacitated graph $G=(V,E)$ with 
$n$ nodes $V$ and $m$ directed links $E$.  Assume that we are given the set of link weights 
$\mathbf{w} = \left( w(e_1), w(e_2), \ldots, w(e_m) \right).$ The traffic in the network is specified in terms of a $n \times n$ traffic matrix between each pair of nodes in the network. The traffic between source node
$s$ and destination node $d$ is represented by $x_{sd}.$ In general, there may not be traffic between all source-destination pairs. We use $p$ to denote the number of source-destination pairs between which there is non-zero traffic. In the rest of the paper, instead of viewing the traffic as a matrix, we represent the traffic 
as a $p$-vector $\bm{x}$.
For a given set link weights $\mathbf{w}$, traffic is routed between nodes $s$ and $d$ along the shortest path between $s$ and $d$. We assume that ties between shortest paths are broken arbitrarily. It is easy to extend the approach in this note to the case where traffic is split between equal cost paths (ECMP). This routing induces a flow on the links in the network. Let $S(e)$ denote the set of source destination pairs that are routed on link $e$. A source-destination pair $(s,d) \in S(e)$ if link $e$ is on the shortest path from $s$ to $d$.  Let $b(e)$ denote the measured flow on link $e$. The traffic matrix estimation problem is the determination of
$x_{s,d}$ given the link load measurements $b(e)$.
Note that the traffic flow on link $e$ 
\begin{equation}
  b(e) = \sum_{(s,d) \in S(e)} x_{s,d}  
\end{equation}
We create a {\em routing matrix} $A$ with $m$ rows, one corresponding to each directed link, and $p$ columns, one corresponding to each source-destination pair. We set 
\begin{equation}
A_{ij}=
\begin{cases}
1 \quad \mbox{ if } M(j) \in S(i)\\
0 \quad \mbox{ Otherwise } 
\end{cases}
\end{equation}
where $M$ is the mapping from row index $i$ to a source-destination pair $(s, d)$.
The objective of the TM estimation problem is to determine a {\em non-negative} solution 
to the system $\bm{Ax=b}$ 
where $\bm{A}$ is an $m \times p$ routing matrix and $\bm{b}$ is the link load vector.
If there is no additional information, the number of source-destination will be much more than the number of links,  then this system has an infinite number of solutions since $m \ll p$. Therefore, we impose additional constraints on $\bm{x}$ in order to narrow down the solution space.
\subsection{Distribution Constraint}
In order to motivate the distribution constraint, we consider the traffic matrix estimation problem on a network (NET82) with $82$ nodes and $296$ directed links.  Each demand matrix comprises of $6724=(82 \times 82)$
potential demands. The NET82 dataset is a real network with available measurements of the real TM. In the demand matrix that was measured, there are $1939$ non-zero demands. We show a plot of demand sizes on the left side of Figure \ref{demdes}. 
 \begin{figure}[htb]
\centering
\includegraphics[width=3.8 in]{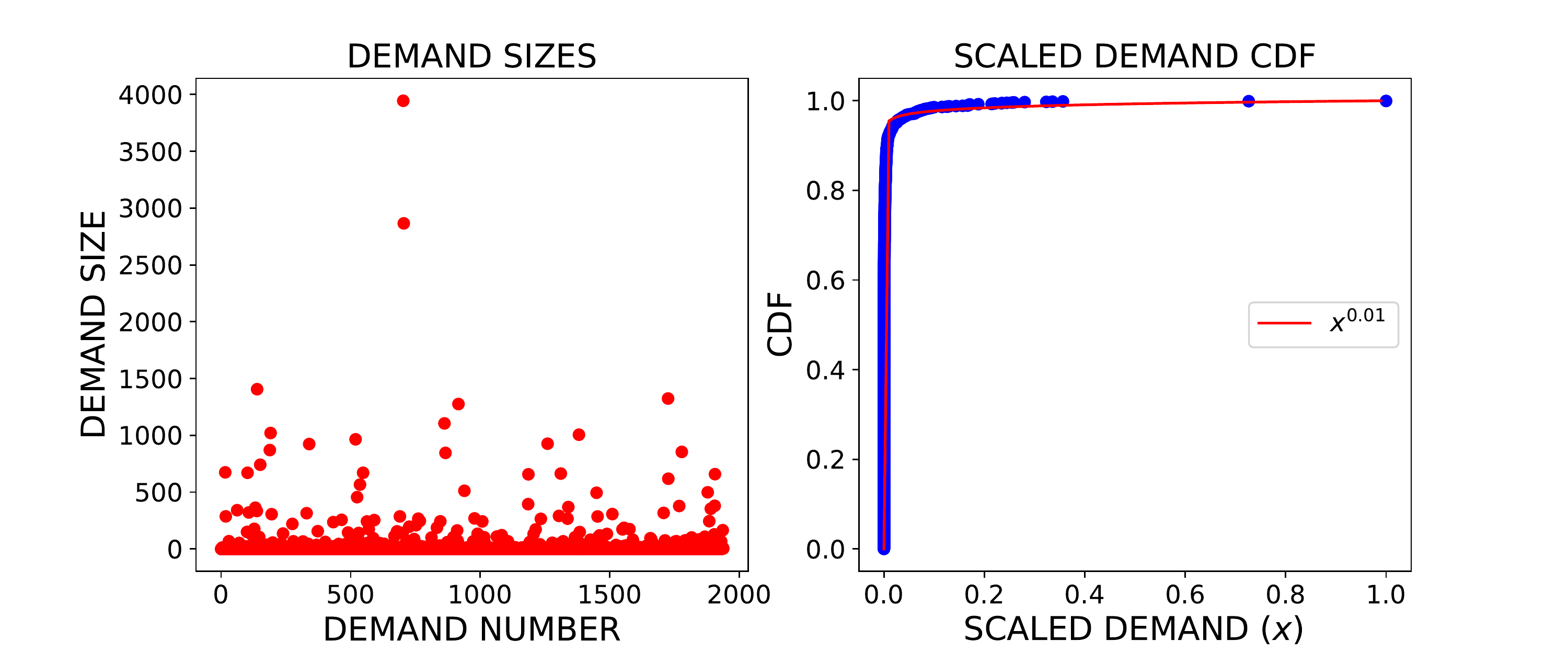}
\caption{Plot of the Demands and the normalized Empirical Distribution Function}
\label{demdes}
\end{figure}
Note that there are a few large demands and several medium to small demands.
The right hand side of Figure \ref{demdes} shows the cumulative distribution 
function of the normalized demand sizes where the demands are scaled such that the largest demand is one unit. Note that that cdf is modeled well using a power law distribution $x^{0.01}.$
The same pattern is observed in $4$ other demand matrices on the same network.
Therefore, when estimating a TM on this network from link load measurements, we would ideally like this demand matrix to have the same pattern of demands. Assume 
that we have observed a link load vector $\bm{b}$ from an unknown traffic matrix 
and we find a solution for the system $\bm{Ax=b}, \bm{x} \geq 0.$ We show two alternative solutions to this system in Figure \ref{twosol}. 
 \begin{figure}[htb]
\centering
\includegraphics[width=3.8 in]{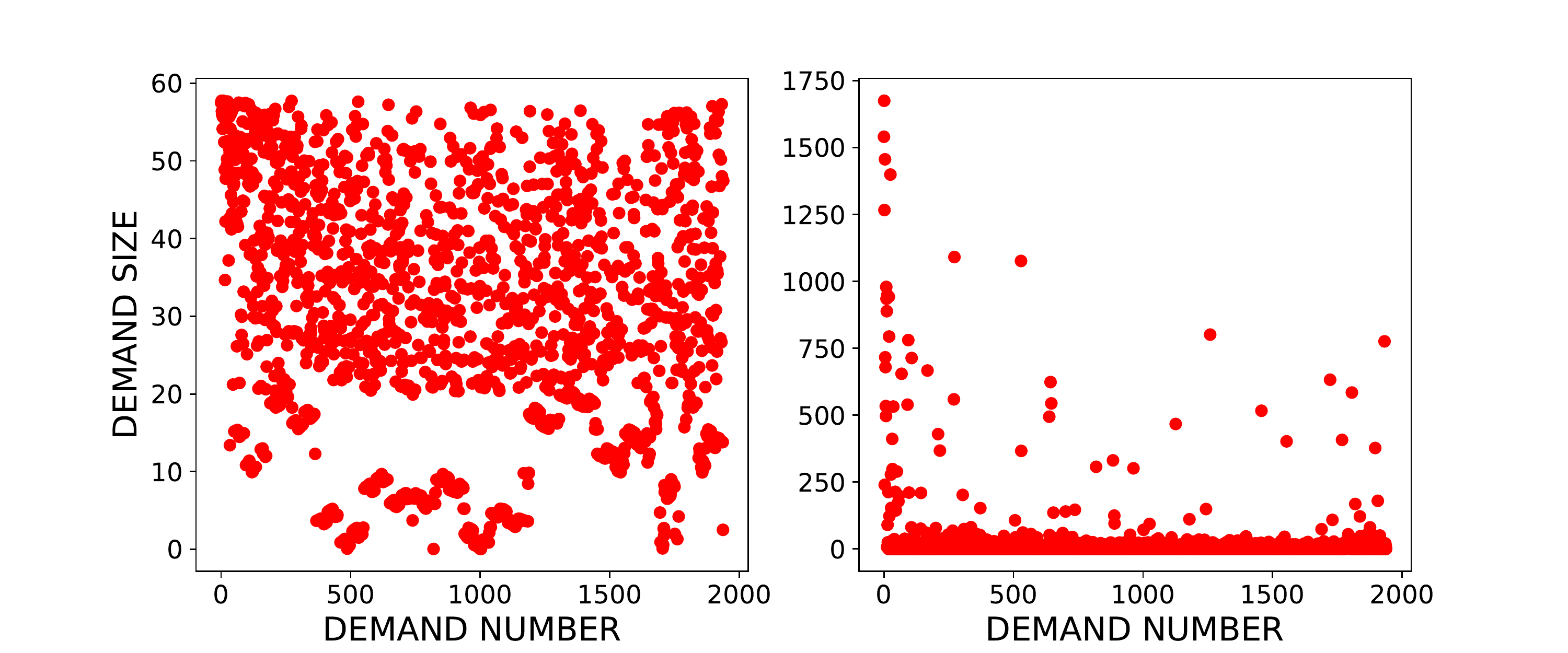}
\caption{Two Different Traffic Matrix Estimates for the Same Link Load Observation}
\label{twosol}
\end{figure}
In the solution on the left, the traffic matrix comprises of uniformly distributed demands and in the solution shown in the right hand side of the Figure \ref{twosol},the normalized demands follow the power law $x^{0.01}$. 
It is much more likely, given information about the demand distribution that the actual data 
looks like the traffic distribution on the right. We want to caution the reader that even with this additional restriction on the demand size distribution, the TM reconstruction may not be unique. 
In order to formally define the distribution constraint, we first define the 
the {\em empirical cumulative distribution function} for a given data set.
\begin{definition}
Given a set of data points $y_1 \leq y_2\leq  \ldots\leq y_n$, the empirical cumulative 
distribution function (empirical cdf) of these points is a step function that jumps up by $\frac{1}{n}$ at each of the $n$ data points. Its value at any specified value $z$, is the fraction of observations of the measured variable that are less than or equal to $z$.
\end{definition}
The empirical distribution function is an estimate of the cumulative distribution function that the points in the sample are generated from and it converges with probability one to the underlying cdf. 
\subsection*{Specifying the CDF of the Solution}
Once we observe one or a few traffic matrices, we can construct the empirical cdf of the raw demands. Since the total traffic in the network can change significantly over time, we have to normalize the demands and use the normalized cdf as shown in the right hand side of Figure \ref{demdes}.
We now define the normalized empirical cumulative distribution function of 
a set of observations.
\begin{definition}
Given a set of data points $y_1 \leq y_2\leq  \ldots\leq y_n$, the normalized empirical cumulative  distribution function (normalized empirical cdf) of these points is a step function that jumps up by $\frac{1}{n}$ at each of the $n$ scaled data points $\frac{y_1}{y_n}, \frac{y_2}{y_n},  \ldots,1$. Its value at any specified value $z \leq 1$, is the fraction of observations of the measured variable that are less than or equal to $z$.
\end{definition}
The domain of the normalized empirical cdf of a set of data points is $[0, 1].$
Assume that the observed traffic matrix has a normalized empirical cdf of $G(z)$
for $0 \leq z \leq 1$. As part of the solution procedure, we have to generate
random variables having a normalized empirical cdf of $G(z)$. A random variable having cdf $F(z)$ can be generated easily using standard random variable generation procedure. We want to use this process to generated random variables having a normalized empirical cdf of $G(z).$  The following result relates the cdf of a random variable to the normalized cdf of $n$ iid samples 
of the random variable.
\begin{ntheorem}
\label{rvth}
Let $X_1,X_2, \ldots, X_n$ represent independent, identically distributed samples from a probability density function $f(x)$ (with the corresponding distribution function $F(x)$). Let 
\begin{equation}
 Y_i = \frac{X_i}{\max_j X_j}   
\end{equation}
Then, $Y_i$ are distributed with cdf
\begin{equation}
    G(y) = n \int_t F(yt) \left[ F(t) \right]^{n-2} f(t) dt, \quad 0 \leq y \leq 1.
\end{equation}
\end{ntheorem}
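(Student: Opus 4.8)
The plan is to exploit the exchangeability of the samples to collapse the claim onto a single conditioning computation. Since $X_1,\ldots,X_n$ are i.i.d., the ratios $Y_1,\ldots,Y_n$ are identically distributed, so it suffices to compute $G(y)=P(Y_1\le y)$ for the single index $i=1$. Writing $M=\max_j X_j$, the first move is to split the probability according to which sample attains the maximum: for a continuous $f$ the maximizer is almost surely unique, so the events $A_k=\{X_k=M\}$ partition the sample space, each with probability $1/n$ by symmetry. Note also that $0\le Y_1\le 1$ because $X_1\le M$ for non-negative demands, which is why the support of $G$ is $[0,1]$.

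Next I would treat the generic case $y<1$. On the event $A_k$ with $k\neq 1$ one has $M=X_k$ and $Y_1=X_1/X_k$, so $\{Y_1\le y\}\cap A_k=\{X_1\le yX_k\}\cap\{X_j\le X_k\ \forall j\neq k\}$. Conditioning on the value $X_k=t$ (density $f(t)\,dt$) and using independence, the constraint $X_1\le yt$ contributes a factor $F(yt)$ — and since $yt\le t$ this automatically subsumes $X_1\le t$ — while each of the remaining $n-2$ samples contributes a factor $F(t)$. Integrating over $t$ yields $\int_t F(yt)\,[F(t)]^{n-2} f(t)\,dt$ for each fixed maximizing index $k\neq 1$, and summing this identical contribution over the admissible indices produces the stated integrand, up to the leading constant.

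The step I expect to be the main obstacle is pinning down that leading constant, which is entirely a matter of correctly accounting for the event $A_1$ in which $X_1$ is itself the maximum. On $A_1$ one has $Y_1=1$ deterministically, so this event contributes no mass to $G(y)$ for $y<1$ but places an atom of size $P(A_1)=1/n$ at $y=1$. Indeed the substitution $u=F(t)$ gives $\int_t [F(t)]^{n-1} f(t)\,dt=1/n$, so the purely continuous part, which by the index count carries a prefactor of $n-1$, evaluates at $y=1$ to $(n-1)/n$, and adding the atom recovers $G(1)=1$. Carrying this bookkeeping through is the delicate point: it determines whether the factor multiplying the integral is $n-1$ (for the continuous part on $y<1$) or $n$ (once the endpoint atom is folded in), and it is the one place where the argument must be done by careful accounting rather than a routine symmetry count. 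Once the constant is fixed, monotonicity of $G$ together with the boundary values $G(0)=0$ and $G(1)=1$ confirms that $G$ is a genuine normalized cdf on $[0,1]$.
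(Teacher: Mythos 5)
Your decomposition is sound, and the ``delicate bookkeeping'' you single out is exactly the point on which the paper's own proof stumbles. The paper conditions on the value of the maximum, writing $\Pr[X_j \le yt \mid M = t]$ and then replacing it by $\Pr[X_j \le yt \mid X_j \le t] = F(yt)/F(t)$. This silently ignores the event that $X_j$ is itself the maximum: conditionally on $M=t$, with probability $1/n$ we have $X_j = t$ (so $X_j \le yt$ fails for $y<1$), and only with probability $(n-1)/n$ is $X_j$ a non-maximal sample, in which case its conditional law is $F$ truncated to $[0,t)$. The correct conditional probability is therefore $\frac{n-1}{n}\,F(yt)/F(t)$, and integrating against the density $n[F(t)]^{n-1}f(t)$ of $M$ gives, for $y<1$,
\begin{equation*}
G(y) = (n-1)\int_t F(yt)\,[F(t)]^{n-2} f(t)\,dt ,
\end{equation*}
together with an atom of mass $1/n$ at $y=1$ --- precisely the accounting you describe. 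The stated theorem, with prefactor $n$ and no atom, is therefore not correct as written; your version is. A quick check with $n=2$ and $X_i$ uniform on $[0,1]$ confirms this: $\Pr[X_1/\max(X_1,X_2)\le y] = \Pr[X_1 \le yX_2] = y/2$ for $y<1$, matching the $(n-1)$ formula, while the stated formula gives $y$.

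In short, your route is the same as the paper's (condition on which sample attains the maximum, equivalently on its value), but you carry the symmetry count correctly where the paper does not. The downstream consequence is quantitative rather than structural: for the $B(\alpha,1)$ example the corrected normalized cdf is $\frac{n-1}{n}y^\alpha$ for $y<1$ plus the atom at $1$, which approaches the paper's $y^\alpha$ as $n\to\infty$, so the practical recipe for generating power-law samples survives essentially unchanged for large $n$. (Minor note: the paper's proof also contains the typo $Y_j = X_j/n$ where $Y_j = X_j/M$ is meant.)
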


\begin{proof}
Given $x_1,X_2, \ldots, X_n$ iid from a distribution function $F(x)$, we let
\begin{equation}
  M = \max_{1 \leq j \leq n} X_j.  
\end{equation}
Then 
\begin{equation}
 Pr \left[ M \leq t \right] = Pr \left[ X_j \leq t \forall j \right] = \left[ F(t) \right]^n,   
\end{equation}
with the corresponding density function $n \left[ F(t) \right]^{n-1} f(t).$
We set 
\begin{equation}
 Y_j = \frac{X_j}{n} \quad 1 \leq j \leq n.   
\end{equation}
Then
\begin{eqnarray}
Pr \left[ Y_j \leq y \right] & = & \int_t Pr \left[ X_j \leq y t | M = t \right] Pr \left[ M = t \right] \nonumber \\
& = &  \int_t Pr \left[ X_j \leq y t | X_j \leq t  \right] Pr \left[ M = t \right] \nonumber \\
& = & \int_t \frac{Pr \left[ X_j \leq y t \right]} {Pr \left[ X_j \leq t  \right]} Pr \left[ M = t \right] \nonumber \\
& = & n \int_t \frac{F[yt]}{F[y]}  \left[ F(t) \right]^{n-1} f(t) \nonumber \\
& = & n \int_t F(yt) \left[ F(t) \right]^{n-2} f(t) dt,  \\
&& \quad 0 \leq y \leq 1. \nonumber
\end{eqnarray}
\end{proof}
We now give an example of the use of this theorem that is also very useful in practice to generate samples with the desired normalized empirical cdf. In many examples, the normalized cdf of the demand sizes follows a power law with parameter $\alpha$. In this case, the $G(x) \thicksim x^{\alpha}$ for some specified value of $\alpha$ for $0 \leq x \leq 1.$ Note that the higher the value of $\alpha$ the smaller is the number if larger demands. In the next result, we use Theorem \ref{rvth} to show that the a suitable underlying
beta distribution has a normalized power law cdf. The probability density function of a beta distribution is given by
\begin{equation}
    f(x) = C x^{\alpha-1} (1-x)^{\beta - 1}
\end{equation}
where $C$ is a constant to ensure that the total probability is 1. 
This distribution covers a 
common case. It is possible to use the result of Theorem \ref{rvth} to generate any desired normalized empirical cdf. 
\subsection*{Normalized empirical cdf of a Beta Distribution}
If $X_i \thicksim B(\alpha,1)$ for $1 \leq i \leq n$ denote $n$ iid samples from a beta distribution with parameters $(\alpha,1)$ then the distribution and density functions of 
$X_i$ are 
\begin{equation}
 \quad \quad F(x)= x^\alpha, f(x)= \alpha x^{\alpha-1},   \quad 0 \leq x \leq 1.   
\end{equation}
Therefore from Theorem \ref{rvth}, the normalized cdf  is
\begin{eqnarray}
G(y) & = & n \int_0^1 F(yt) \left[ F(t) \right]^{n-2} f(t) dt \quad 0 \leq y \leq 1 \nonumber \\
& = & n \int_0^1 (yt)^{\alpha} t^{(n-2) \alpha} \alpha t^{\alpha-1} dt \quad 0 \leq y \leq 1  \nonumber \\
& = & n \alpha y^\alpha \int_0^1 t^{n \alpha -1} dt \quad 0 \leq y \leq 1  \nonumber \\
& = & y^\alpha \quad 0 , \leq y \leq 1
\end{eqnarray}
Note that the normalized empirical cdf is independent of $n$ and is only a function of $\alpha$. This is not true in general. Therefore, if we need 
to generate $n$ random variates having a normalized empirical cumulative cdf of $x^\alpha$ we do the following:
\begin{itemize}
    \item Generate $X_1,X_2, \ldots, X_n$ independent random samples from $B(\alpha,1).$
    \item Let $X_{\mbox{max}}= \max_{1 \leq i \leq n} X_i.$
    \item Output $$\frac{X_1}{X_{\mbox{max}}}, \frac{X_2}{X_{\mbox{max}}}, \ldots, \frac{X_n}{X_{\mbox{max}}}$$ as the set of $n$ samples with normalized empirical cdf $x^\alpha.$
\end{itemize}

\section{Proj-D:  Projection Based Traffic Matrix Estimation Method}\label{Sec:projection}
Kakcmarz method \cite{karczmarz1937angenaherte} or the Algebraic Reconstruction Technique (ART) is a 
well known technique for finding a feasible solution to the system 
$Ax=b.$ Assume that there are $m$ rows in the matrix and $p$ columns. Recall 
that each of the $m$ rows correspond to a link load measurement and 
each of the $p$ columns corresponds to a demand. 
We can represent the set of equations as $a_i x = b_i$ for $i,2, \ldots, m$
and $a_i$ and $x$ is a $p$ dimensional vectors.
ART is a cyclic projection technique where we start off from an arbitrary initial $p$-vector $x$. The algorithm then projects this point onto the first constraint $a_1 x = b_1.$ Projection just involves finding the closest point 
to $x$ on the hyperplane $a_1 x = b_1.$ This is the new point. 
This point is then projected onto the second hyperplane and so on until 
we reach hyperplane $m$. This point is then projected onto the first hyperplane and this process is repeated in a cyclic manner as shown in the 
Cyclic Projection Algorithm 
\begin{algorithm}
\label{cpa0}
\caption{Cyclic Projection Method}
\label{alg:Pro}
\begin{algorithmic}[1]
\STATE Pick an arbitrary $p$-vector  $\bm{x}.$ \\
\FOR{$k=1,2, \ldots, K$ }
\FOR{$i=1,2, \ldots, m$}
\STATE $\bm{x} \leftarrow \bm{x} + \bm{a_i}^T(b_i - \bm{a_i} \bm{x})/(\bm{a_i}\bm{a_i}^T)$
\ENDFOR
\ENDFOR
\end{algorithmic}
\end{algorithm}

\begin{ntheorem}
The Cyclic Projection Algorithm shown above converges to a feasible solution to $Ax=b$ after a sufficient number of iterations. 
\end{ntheorem}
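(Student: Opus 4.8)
The plan is to prove convergence by exploiting the fact that each update in Algorithm~\ref{alg:Pro} is the orthogonal projection of the current iterate onto the hyperplane $H_i = \{\bm{y} : \bm{a_i}\bm{y} = b_i\}$, together with the fact that the system is consistent, so that the solution set $S = \bigcap_{i=1}^m H_i$ is nonempty. Write $\bm{x}^0, \bm{x}^1, \bm{x}^2, \ldots$ for the sequence of iterates produced by the successive single-row projections (so that one full pass of the inner loop advances the index by $m$), and let $P_i$ denote the projection map appearing in line~4. Fix any $\bm{x}^* \in S$. Since $\bm{x}^*$ lies on every hyperplane, the displacement $P_i\bm{y} - \bm{y}$ is parallel to $\bm{a_i}^T$, while $\bm{x}^* - P_i\bm{y}$ is a difference of two points of the affine set $H_i$ and is therefore orthogonal to $\bm{a_i}^T$; the Pythagorean theorem then yields the identity $\|P_i\bm{y}-\bm{x}^*\|^2 = \|\bm{y}-\bm{x}^*\|^2 - \|P_i\bm{y}-\bm{y}\|^2$.

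Applying this identity at every step gives $\|\bm{x}^{k+1}-\bm{x}^*\|^2 = \|\bm{x}^k - \bm{x}^*\|^2 - \|\bm{x}^{k+1}-\bm{x}^k\|^2$, so the nonnegative sequence $\|\bm{x}^k - \bm{x}^*\|$ is monotonically non-increasing (Fej\'er monotonicity) and hence convergent; in particular the iterates stay inside a fixed ball about $\bm{x}^*$. Summing the identity over $k$ shows $\sum_k \|\bm{x}^{k+1}-\bm{x}^k\|^2 < \infty$, so the per-step displacements vanish, $\|\bm{x}^{k+1}-\bm{x}^k\| \to 0$. By Bolzano--Weierstrass the bounded sequence has a convergent subsequence, and by passing to a further subsequence I may assume all its members sit at the same phase of the cycle, say at cycle boundaries, converging to a limit $\bar{\bm{x}}$.

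The crux --- and the step I expect to be the main obstacle --- is to show that this limit $\bar{\bm{x}}$ actually solves $\bm{A}\bm{x}=\bm{b}$ rather than merely existing. Let $T = P_m P_{m-1}\cdots P_1$ be the composition realizing one full cycle; $T$ is a continuous (indeed affine) map. Because consecutive displacements vanish, the net motion over a whole cycle also vanishes, $\bm{x}^{k+m}-\bm{x}^k \to 0$, so along the aligned subsequence both $\bm{x}^{k_j}$ and $T\bm{x}^{k_j}=\bm{x}^{k_j+m}$ converge to $\bar{\bm{x}}$; continuity then gives $T\bar{\bm{x}}=\bar{\bm{x}}$. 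I next chain the Pythagorean identity across the cycle: each $P_i$ is non-expansive toward $\bm{x}^*$, with equality precisely when its argument already lies on $H_i$, so a fixed point of $T$ can occur only if no individual projection in the cycle moves the point. Unwinding this one factor at a time forces $\bar{\bm{x}} \in H_i$ for every $i$, i.e. $\bm{A}\bar{\bm{x}}=\bm{b}$, so $\bar{\bm{x}} \in S$.

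Finally I upgrade subsequential convergence to convergence of the whole sequence. Since $\bar{\bm{x}}$ is itself a solution, I re-run the Fej\'er argument with $\bm{x}^* = \bar{\bm{x}}$: the sequence $\|\bm{x}^k - \bar{\bm{x}}\|$ is monotonically non-increasing, yet the subsequence $\|\bm{x}^{k_j}-\bar{\bm{x}}\|$ tends to $0$; a monotone sequence possessing a subsequence converging to $0$ must itself converge to $0$. Hence $\bm{x}^k \to \bar{\bm{x}}$, a feasible solution of $\bm{A}\bm{x}=\bm{b}$, which is exactly what Algorithm~\ref{alg:Pro} was claimed to produce.
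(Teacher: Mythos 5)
Your proposal is correct, but it is worth noting that the paper does not actually contain a proof of this theorem: it simply defers to the original reference \cite{karczmarz1937angenaherte}. What you have written is a complete, self-contained version of the classical convergence argument for the Kaczmarz/ART iteration on a \emph{consistent} system, and all of its steps check out: the update in line~4 of the Cyclic Projection Method is indeed the orthogonal projection onto the hyperplane $\{\bm{y}: \bm{a_i}\bm{y}=b_i\}$, the Pythagorean identity with respect to any common point $\bm{x}^*$ gives Fej\'er monotonicity, the telescoping sum forces the per-step displacements to vanish, the fixed-point argument for the one-cycle map $T$ correctly unwinds to membership in every hyperplane, and the final monotonicity argument legitimately upgrades subsequential convergence to convergence of the whole sequence. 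Two small points deserve explicit mention. First, your argument genuinely needs consistency of $\bm{Ax=b}$ (so that $S\neq\emptyset$); this is implicit in the theorem statement, and is automatic in the paper's setting because $\bm{b}$ is generated by an actual traffic matrix, but you should state it as a hypothesis. Second, your proof applies to the plain cyclic projection algorithm; it does \emph{not} cover the nonnegativity-clamped variant introduced immediately afterward (the pointwise $\max\{0,\cdot\}$ step is a projection onto the orthant, not onto $H_i$, so the single-step Pythagorean identity no longer holds verbatim and one would instead invoke the theory of alternating projections onto convex sets). Compared with the paper, which buys brevity by citation, your route buys a transparent, elementary argument whose only tools are orthogonality and monotone sequences, and it makes visible exactly where consistency and the finiteness of the number of hyperplanes are used.
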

See \cite{karczmarz1937angenaherte} for a proof of this result. 
In the description of the cyclic projection algorithm, we refer to one iteration through all $m$ constraints as a {\em cycle.}
This cyclic projection algorithm can be extended directly to the case where
we want to find a non-negative feasible solution to the system $Ax=b$
by modifying the projection step by 
\begin{equation}
   \bm{x} \leftarrow \max\left\{ 0, \bm{x} + \bm{a_i}^T(b_i - \bm{a_i} \bm{x})/(\bm{a_i}\bm{a_i}^T) \right\}  
\end{equation}
where the max operation is a pointwise maximum. In other words, if after 
computing the projection, some components of $\bm{x}$ are negative, then we set these components to zero.
More recently randomized versions of the cyclic projection method where the next hyperplane to project onto is picked at random has been shown to have 
linear expected convergence \cite{strohmer2009randomized}.
If we use the cyclic projection algorithm (or its randomized version), then
the method gets an arbitrary solution. In order to ensure that the 
solution satisfies the distribution constraint, we periodically move the current solution to the a compatible point in the distribution. This is done as follows:
\begin{itemize}
    \item Once every $t$ cycles, we take the current solution $x$ and assume 
    that we renumber the components such that $x_1 \leq x_2 \leq \ldots \leq x_p.$ 
    \item We generate a $p$ random variates $y_1 \leq y_2 \leq \ldots \leq y_p$ that have the desired normalized empirical distributon.
    For instance, if we want $x$ to have a power law distribution with power law exponent $\alpha$, then we generate n iid samples from a beta distribution $B(\alpha,1)$ and then $y$ is the ratio of the these iid samples to the maximum value in the iid samples.
    \item We set $x_i= \lambda y_i$ for $1 \leq i \leq p$ for a suitably
    chosen scaling parameter $\lambda$
\end{itemize}
The scaling parameter $\lambda$ is chosen to minimize the deviation $D$ where $D$ is defined as 
\begin{equation}
  D = \min \sum_{j=1}^m \left( \lambda a_j y - b_j \right)^2.  
\end{equation}
Note that $D$ is sum of the squared deviation over all the constraints. Using calculus, it is easy to see that the optimal solution is 
\begin{equation}
   \lambda = \frac{\sum_{j=1}^m ( a_j y ) b_j}{\sum_{j=1}^m ( a_j y ) ^2} 
\end{equation}
We now label all the $y$ values by $\lambda$ and map the $y$ variables to the corresponding $x$ variables, that is,  
$x_i = \lambda y_i$ for $1 \leq i \leq p.$
This is the new starting point for the next cycle. The algorithm is terminated after $K$ cycles. We can view this process as running the cyclic projection method with $K$ starting solutions having the desired normalized empirical cdf. The overall description of the algorithm is shown below.

\begin{algorithm}
\label{cpa1}
\caption{Proj-D: Projection Based TM Estimation}
\label{alg:Pro0}
\begin{algorithmic}[1]
\STATE Pick an arbitrary $p$-vector  $\bm{x}.$ \\
\FOR{$k=1,2, \ldots, K$ }
\FOR{$j=1,2, \ldots, t$ }
\FOR{$i=1,2, \ldots, m$}
\STATE $\bm{x} \leftarrow \max\left\{ 0, \bm{x} + \bm{a_i}^T(b_i - \bm{a_i} \bm{x})/(\bm{a_i}\bm{a_i}^T) \right\} $
\ENDFOR
\STATE Reorder $\bm{x}$ such that $x_1 \leq x_2 \leq \ldots \leq x_m$
\STATE Generate $y_1 \leq y_2 \leq \ldots \leq y_p$ with the 
desired normalized empirical distribution
\STATE Compute $ \lambda = \frac{\sum_{j=1}^m ( a_j y ) b_j}{\sum_{j=1}^m ( a_j y ) ^2}$
\STATE Set $x_i = \lambda y_i$ for $1 \leq i \leq p$
\ENDFOR
\ENDFOR
\end{algorithmic}
\end{algorithm}
When generating random variables having the desired normalized empirical distribution, we can repeat the generation of the random variables and finding the optimal $\lambda$ multiple times and pick the solution that has the minimum $D$ value.
If the value of $t$ is chosen to be large enough that the solution over two successive cycles over all the constraints does not vary the solution too much.
Proj-D assumes that the there is enough data or operator experience to specify the (normalized) distribution of the demands. If there are several prior traffic matrices available, then it is possible to not only capture the distribution information but also spatial correlations. We use a GAN based approach to address this problem. Though the GAN based approach does indeed capture spatial correlations in addition to any distribution information, the projection based approach that is tailor made for distribution problem out performs the GAN based approach if we only have distribution information.

\section{Generative Adversarial Networks}\label{Sec:GAN}
The idea of using a GAN based approach to capture spatial correlation in the traffic matrix was motivated by the  impressive capabilities demonstrated by GANs for generating samples that resemble real world images \cite{goodfellow2014generative, gulrajani2017improved, arjovsky2017wasserstein}. The training of a GAN involves a game between the generator network and discriminator network. The generator and discriminator are both neural networks. The generator learns a mapping from random noise to the space of the given signal. The discriminator tries to distinguish between the real signal and the generated signal. During the game of GAN training, the discriminator is updated by learning from the real and generated images. The generator is updated by the gradient provided by the discriminator so that the generator learns to generate samples that resemble the real images. 

The game between the generator $T$ and discriminator $D$ can be written as the objective:
\begin{equation}
    \min_{T} \max_{D} \mathbb{E}_{\bm{x}\sim \mathbb{P}_r} [log(D(\bm{x}))] + \mathbb{E}_{\widetilde{\bm{x}}\sim \mathbb{P}_t}  [1 - log(D(\widetilde{\bm{x}}))]
\end{equation}
where $\mathbb{P}_r$ is the distribution of real data and $\mathbb{P}_t$ is the distribution of the data generated from the generator network $T$.

The game involved in the training process of a GAN requires that there exists some kind of balance between the generator and discriminator. If the discriminator is too strong then it fails to provide useful gradient for the training of generator. Various kinds of methods have been proposed to stabilize the training process of GANs \cite{arjovsky2017wasserstein, gulrajani2017improved}. In \cite{arjovsky2017wasserstein}, the Wasserstein-1 distance was proposed for the training of GANs. In addition, the authors in \cite{gulrajani2017improved} proposed a gradient penalty approach for the training of GANs called WGAN-GP, which shows even better performance for the task of image generation. In this paper we adopt the method of WGAN-GP as the training process of the GAN. 

\section{Traffic Matrix Estimation Using Generative Adversarial Networks}\label{Sec:GAN_method}
Since GANs can capture the characteristics of given data, the authors in \cite{bora2017compressed} proposed to use a GAN as a mapping from latent space to signal space for the application of compressive sensing. Their results show that the GAN based compressive sensing method achieves better performance when the sampling rate of the signal is low. The problem of traffic demand matrix estimation given link measurement has the same format as the problem of image compressive sensing \cite{bora2017compressed}. 
Since the link measurements of a TM are also relatively low, we propose to solve the traffic matrix estimation problem with a GAN as the generator for the traffic matrix. Suppose the latent variable of the GAN is $\ell$, the generator $T$ generates the estimated traffic matrix $T(\ell)$, then the problem of traffic matrix estimation can be written as:
\begin{eqnarray}
\min_{\bm{l}} \| \bm{y} - \bm{A}T(\bm{\ell})\|^2_2.   
\end{eqnarray}
A properly trained generator $T$ provides a mapping from the lower dimensional latent space to the space of possible traffic matrices. Since the function $T$ is differentiable, the objective function can be optimized by simple gradient descent. 

Compared with the projection approach, the estimation method using a GAN can be applied for more general cases. If we only have knowledge of the normalized empirical distribution then the GAN can be trained with data generated from the given normalized empirical distribution. If measurements from the past are available, the GAN can also be trained with the data from the past. 

\subsection{Traffic Matrix Estimation Under A Distribution Constraint}
We first consider the problem of TM estimation under a distribution constraint. Unlike the assumption of signal sparsity from previous compressive sensing methods, which can be enforced by adding sparsity regularization terms to the objective function, it is unclear how a distribution constraint can be incorporated into the objective function. However, since a GAN is able to capture the characteristics of given data and generate samples with similar features, the distribution constraint can be included in the objective function by training a GAN that generates samples following a similar distribution. Then the optimization can be conducted in the latent space. Given the cost function
\begin{equation}
    L = \| \bm{y} - \bm{A}T(\bm{\bm{\ell}})\|^2_2,
\end{equation}
the gradient of $L$ can be easily computed by the chain rule. Therefore $L$ can be updated step by step by using simple stochastic gradient descent or any other optimizer such as the adaptive moment estimation (Adam) optimizer \cite{kingma2014adam}. For the experiments in this paper we use the Adam optimizer as the optimizer over the latent space. In the experiments, we find that choosing a better initial point in the latent space can help reduce the optimization steps and provide better estimation results. So we generate $N_i$ random vectors $\bm{n_i}$ in the latent space and select the one that provides link measurements that is closest to the given link measurements. The we run the optimization for $N_2$ steps. We show the details of this method in algorithm 3. This GAN based estimation method under a distribution constraint is denoted as GAN-D. 

\begin{algorithm}
\caption{GAN Based TM Estimation Method}
\label{alg:GAN}
\begin{algorithmic}[1]
\STATE Generate random Gaussian noise $\bm{n_0}$.\\
\STATE $\bm{\hat{n}} = \bm{\hat{n}_0}$
\FOR{$i=1$; $i<N_i$; $i++$ }
\STATE Generate random Gaussian noise $\bm{n_i}$
\IF{$\| \bm{y} - \bm{A}T(n_i)\|^2_2 < \| \bm{y} - \bm{A}T(\hat{n})\|^2_2$}
\STATE $\bm{\hat{n}} = \bm{n_i}$
\ENDIF
\ENDFOR
\FOR{$j=0$; $j<N_2$; $j++$}
\STATE $\bm{\hat{n}} = \bm{\hat{n}} + \nabla_{\bm{n}} L$
\ENDFOR
\end{algorithmic}
\end{algorithm}

\subsection{Traffic Matrix Estimation With Training Data}
In some cases, in addition to link measurements, some TMs from the past may be also available. In this case the GAN can be directly trained with the available data. In addition to the distribution of demands, the TM data may also contain spatial information that can be learned by the GAN. With the trained generator, the optimization steps will be the same as those with a distribution constraint. 

\section{Experiment Setup}\label{Sec:experiment}
We evaluate the performance of our methods with three datasets. The first dataset is the NET82 dataset which contains one TM with 82 nodes. The second dataset is the Abilene dataset \cite{zhang2003fast} which contains TMs with 12 nodes and 52 links. The third dataset is the G\'EANT dataset \cite{uhlig2006providing}, which has 23 nodes and 38 links. Note that when $\beta=1$ the Beta distribution becomes a power law distribution. In our experiments we found that the power law distribution is sufficient for fitting the distribution of the TMs. And the $\alpha$ values are the maximum likelihood estimates from the measured TMs \cite{hahn1967statistical}.

Firstly we test the performance of our method assuming only the distribution of the demands is known. For the first dataset a Beta distribution with $\alpha = 0.01154, \beta=1$ is used for the projection based method (Proj-D) and the GAN based method (GAN-D).  
 The parameters are directly used for Proj-D. For GAN-D, we first train the GAN with random matrices generated from the fitted distribution, then we use the GAN for TM estimation. 


For the Abilene dataset we use the TMs collected from March to June for distribution fitting. We use 1000 of the TMs collected in July for testing. We fit a Beta distribution with $\alpha=0.0107$, $\beta=1.0$ according to all the demands collected from March to June. Similar to the case of the first dataset, for Proj-D we use the Beta distribution directly. 

For GAN-D, the TMs from March to June are available and the TM estimation is conducted for the data in July. So the TMs from March to June can be used for the training of the GAN. The GAN is trained for 300 epochs, with 27360 TMs collected from March to June. 

For the G\'EANT dataset, a Beta distribution with $\alpha=0.01411$ and $\beta = 1.0$ is used for Proj-D. For GAN-D, 8016 TMs collected from January to March are used for the training of GAN. Network parameters for the GAN are the same as those for the Abilene dataset. Both methods are tested on 1000 TMs collected in April.

We use the same structure for the GAN for all the datasets. The generator of the GAN is a fully connected neural network with hidden layers of size 32, 64 and 128. The discriminator is also a fully connected neural network with hidden layers of size 512, 256, 256 and 256. We do not focus on finding the best parameters of the GAN in this paper. However we found it beneficial to use a larger neural network for the discriminator, so that the discriminator can more efficiently capture the difference between TMs and random matrices. ReLU is used as the activation function for the neural networks. To keep the balance between the capability of the discriminator and the generator, we update the discriminator 64 times after each training step of the generator.

\section{Performance Evaluation}\label{Sec:performance}
Performance of the methods are evaluated with two different metrics: the root mean square error (RMSE) and the normalized mean absolute error (NMAE) of the estimation results. The NMAE can be written as:
\begin{equation}
    NMAE = \frac{\| x-\hat{x} \|_1}{\| x\|_1}
\end{equation}
The results on shown in Table I. Errors are calculated for the non-zero demands. 

\begin{table}
\renewcommand{\arraystretch}{1.0}
\caption{Performance Comparison}
\label{tab:table2}
\begin{center}
    \begin{tabular}{c || c c c}
    \hline
    \multicolumn{4}{c}{Shortest Path} \\
    \hline
    Method & NET82 & Abilene & G\'EANT \\
    \hline
    Project-D (RMSE/Mbps) & 125.94 & 40.47 & 87.96 \\
    \hline
    GAN-D (RMSE/Mbps) & 194.81 & 25.74 & 65.69 \\
    \hline
    Project-D (NMAE) & 1.20 & 0.94 & 1.51 \\
    \hline
    GAN-D (NMAE) & 1.93 & 0.66 & 1.18 \\
    
    \hline
    \hline
    
    \multicolumn{4}{c}{ECMP} \\
    \hline
    Method & NET82 & Abilene & G\'EANT \\
    \hline
    Project-D (RMSE/Mbps) & 153.35 & 42.05 & 87.12 \\
    \hline
    GAN-D (RMSE/Mbps) & 191.34 & 25.74 & 65.81 \\
    \hline
    Project-D (NMAE) & 1.33 & 0.97 & 1.50 \\
    \hline
    GAN-D (NMAE) & 1.95 & 0.66 & 1.18 \\
    \hline

    \end{tabular}
    
\end{center}
\end{table}
For the NET82 dataset, Proj-D achieves RMSE of 125.94 Mbps and NMAE of 1.20. The RMSE of the results from GAN based method is 194.81 Mbps and the NMAE is 1.93. 
To evaluate the method's ability to meet the distribution constraint, we also compare the empirical cumulative distribution function (CDF) of the solutions. Figure 3 evaluates the performance of the methods on NET82. Figure 3 (a) shows the CDF of the solutions, the fitted distribution and original data. 
\begin{figure*}[!t]
    
    \centering
    \subfloat[Empirical CDF]{\includegraphics[width=.3\textwidth]{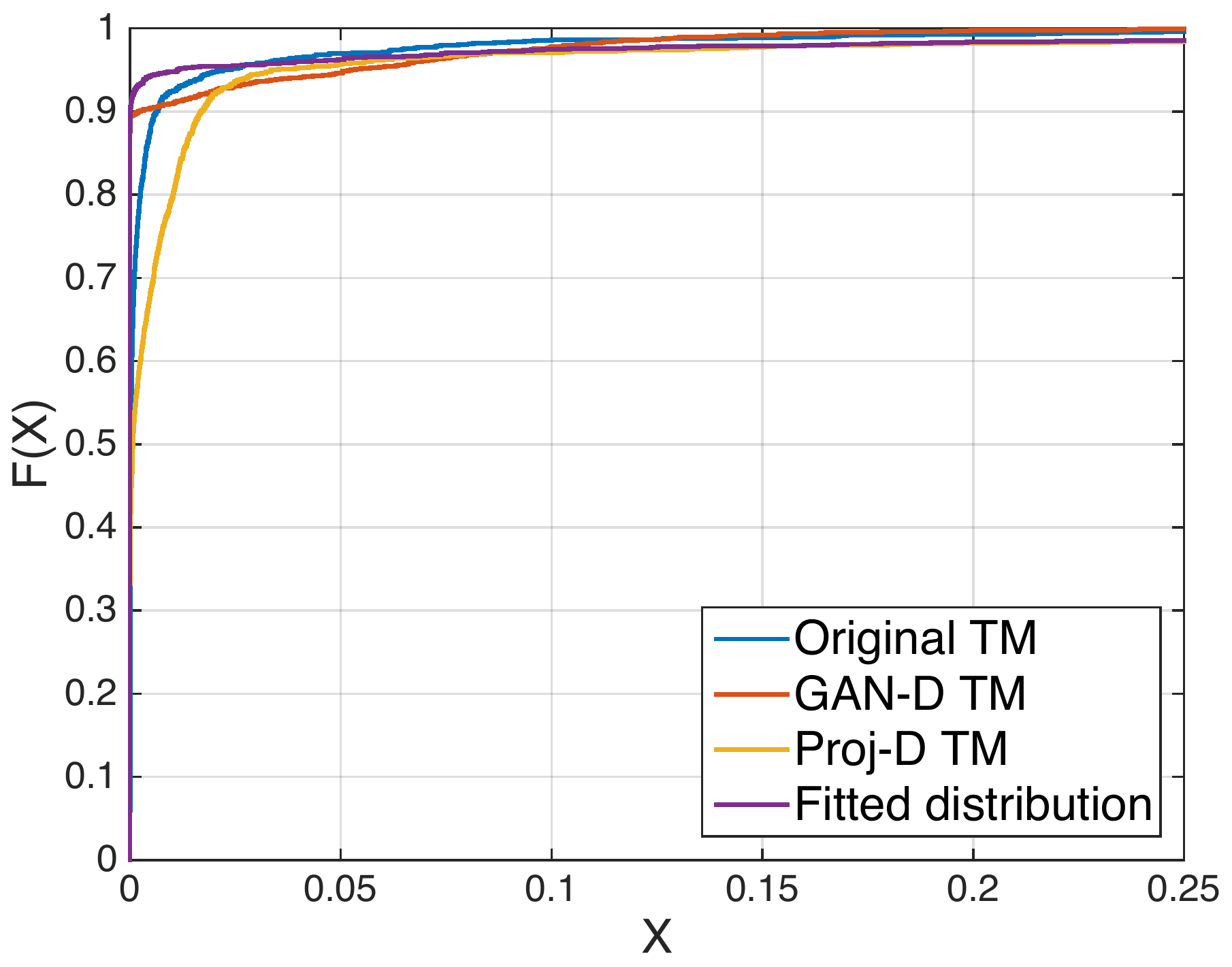}}
    \label{Fig:Traf_CDF}
    \hfil
    \subfloat[Demands]{\includegraphics[width=.3\textwidth]{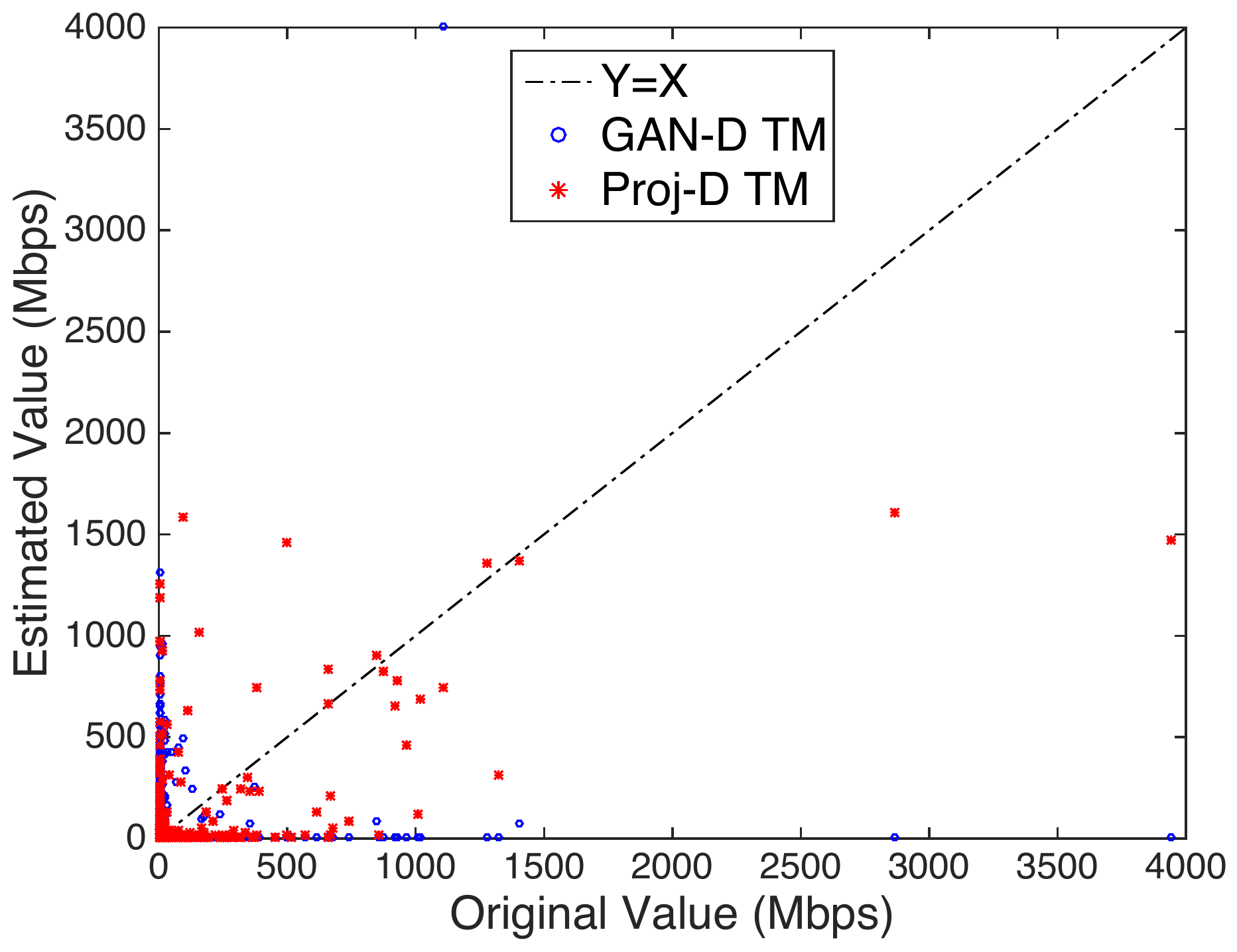}}
    \label{fig:Traf_X}
    \hfil
    \subfloat[Link measurements]{\includegraphics[width=.3\textwidth]{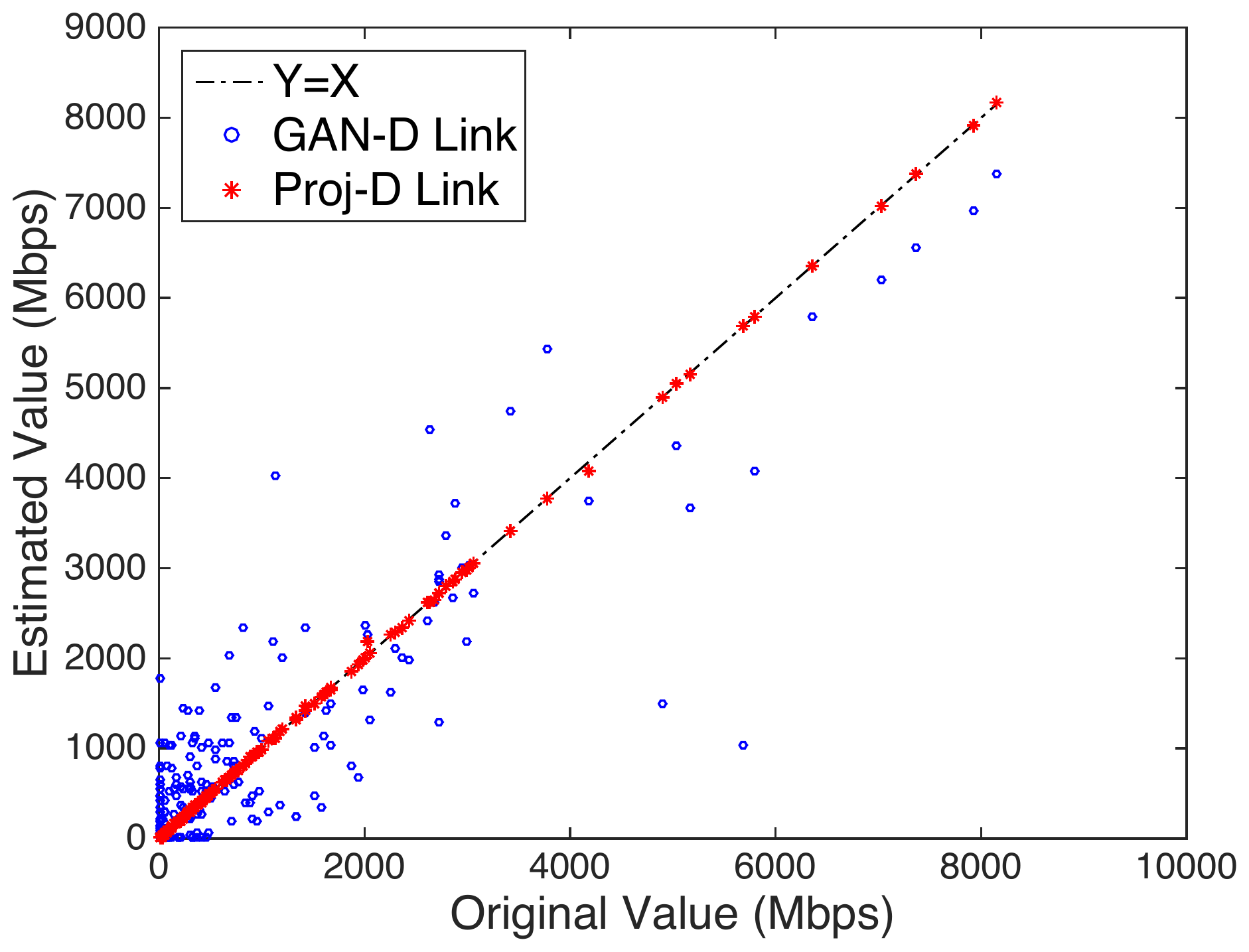}}
    \label{fig:Traf_Y}
    \caption{Performance evaluation on the NET82 dataset}
    
\end{figure*}
Figure 3 (b) shows the recovered demands versus the original demands. Figure 3 (c) shows the fitted link measurements versus the given link measurements. 

For the CDF plot, the TMs are normalized by the maximum value of all the TMs. Since very few of the normalized values are greater than 0.25, we show the CDF plot from 0 to 0.25 to better evaluate how well the estimated TMs fit the original distribution. 

Figure 4 shows the CDF, demands and link measurement of recovery results of the two methods on the Abilene dataset. The CDF plot is generated in the same way as in Figure 3. The demand plot and link measurement plot are generated from the first ten TMs. Proj-D achieves RMSE of 40.47 Mbps and NMAE of 0.94, while GAN-D achieves RMSE of 25.74 Mbps and NMAE of 0.6600. 
\begin{figure*}[!t]
    \centering
    \subfloat[Empirical CDF]{\includegraphics[width=.3\textwidth]{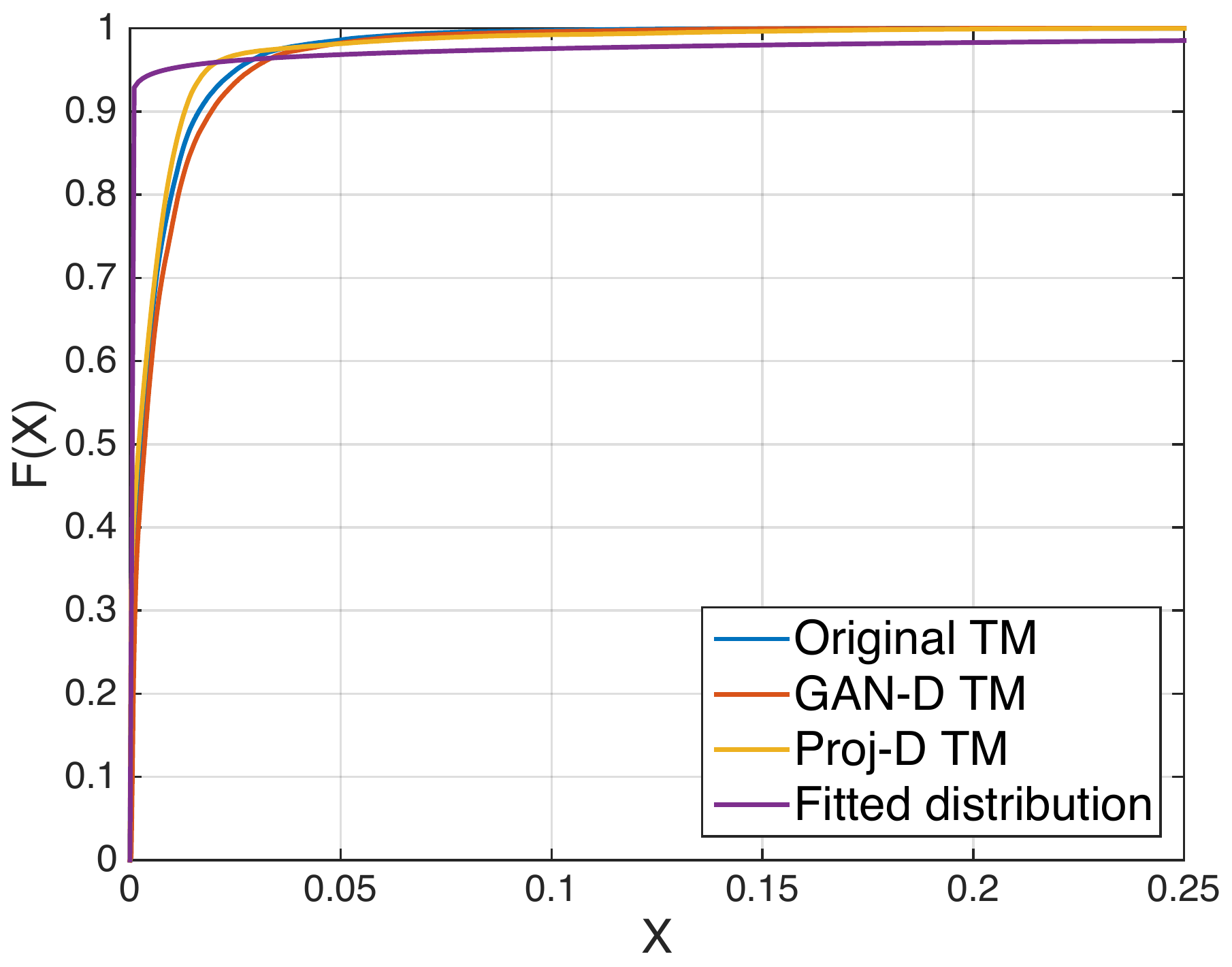}}
    \label{Fig:Ab_CDF}
    \hfil
    \subfloat[Demands]{\includegraphics[width=.3\textwidth]{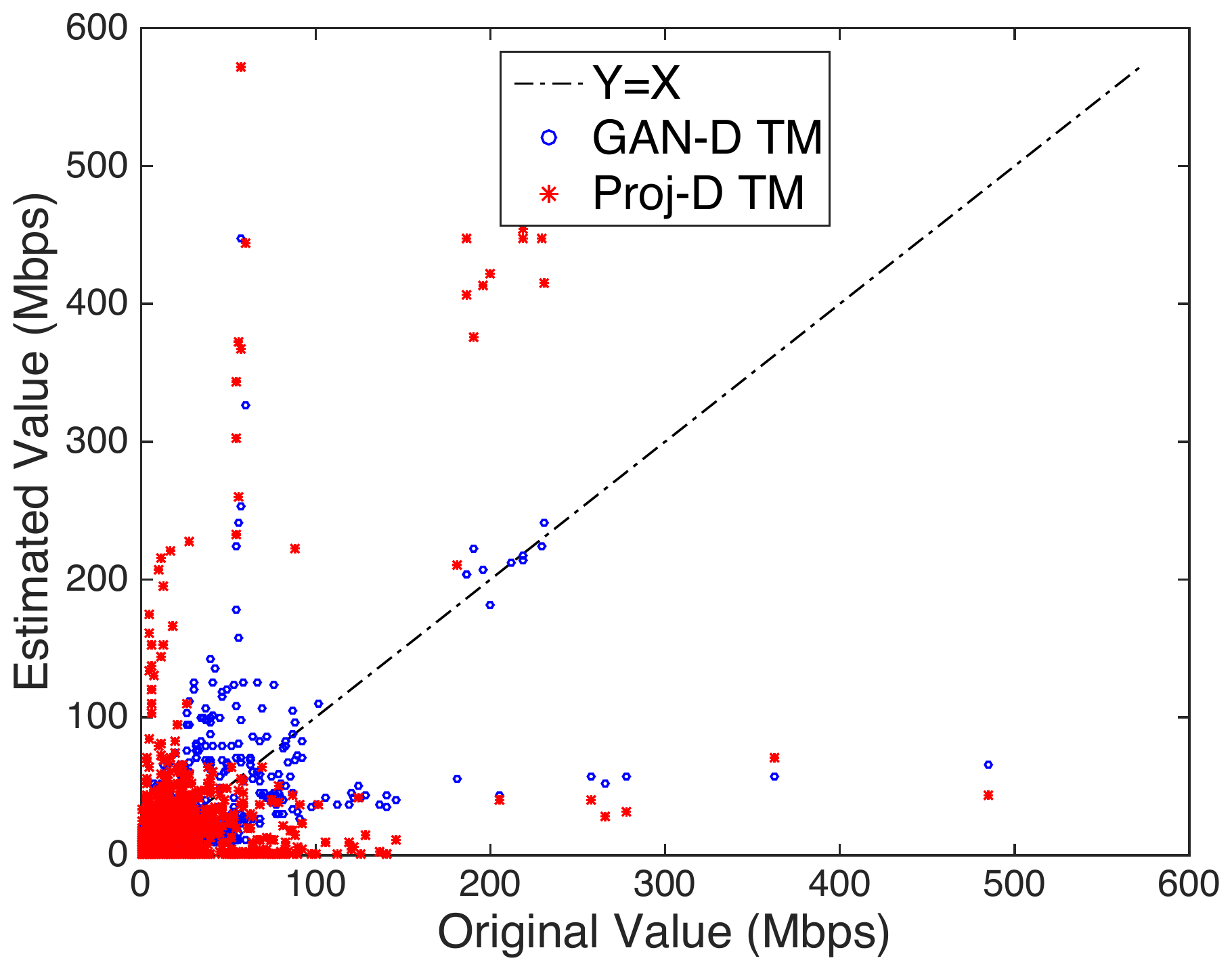}}
    \label{fig:Ab_X}
    \hfil
    \subfloat[Link measurements]{\includegraphics[width=.3\textwidth]{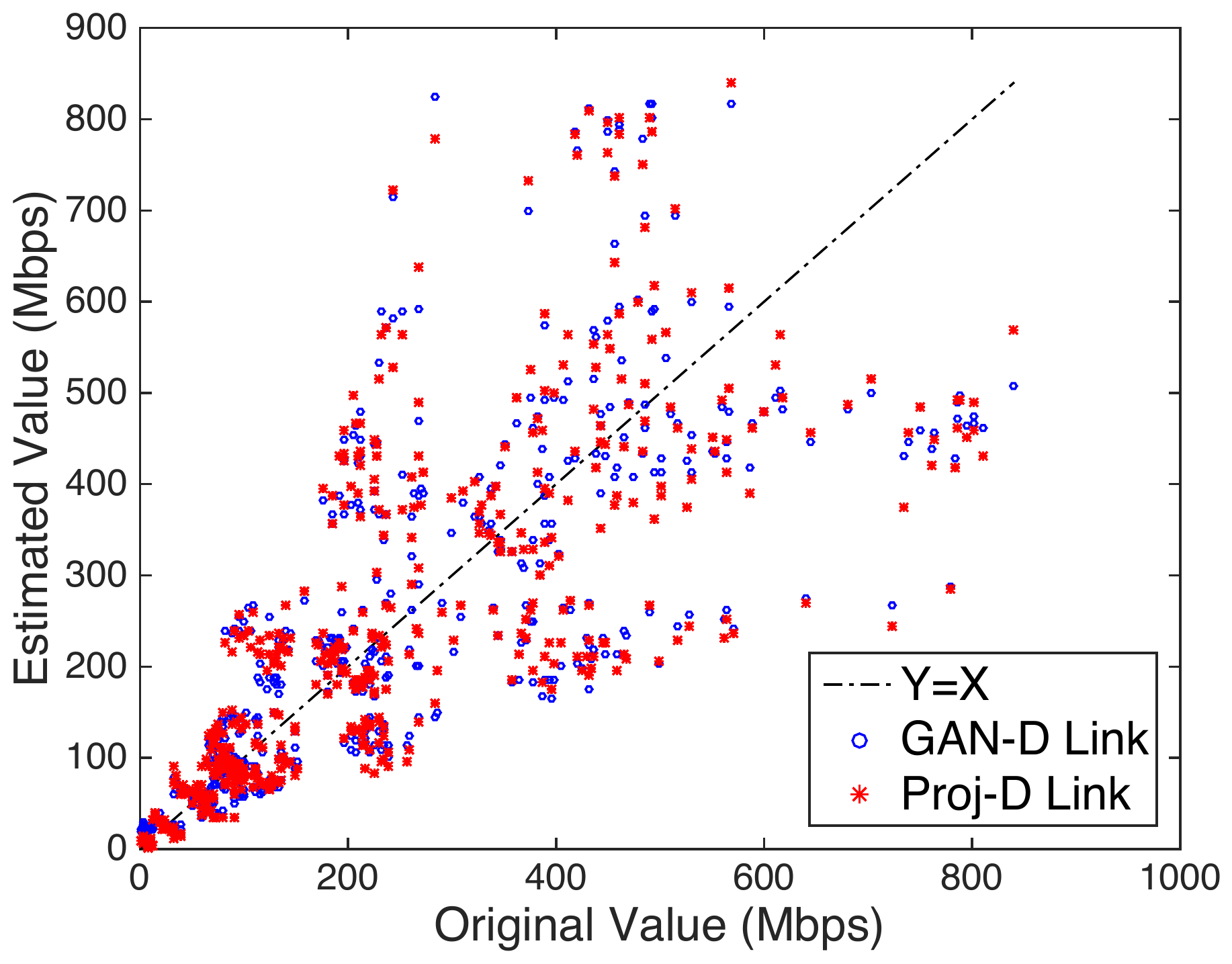}}
    \label{fig:Ab_Y}
    \caption{Performance evaluation on the Abilene dataset}
\end{figure*}

Figure 5 shows the CDF, demands and link loads of recovery results of the two methods on the G\'EANT dataset. The demand plot and link measurement plot are generated from the first ten TMs. Proj-D achieves NMAE of 1.51, GAN-D achieves NMAE of 1.18. In terms of RMSE, Proj-D has RMSE of 87.96 Mbps, GAN-D has RMSE of 65.69 Mbps. 
\begin{figure*}[!t]
    \centering
    \subfloat[Empirical CDF]{\includegraphics[width=.3\textwidth]{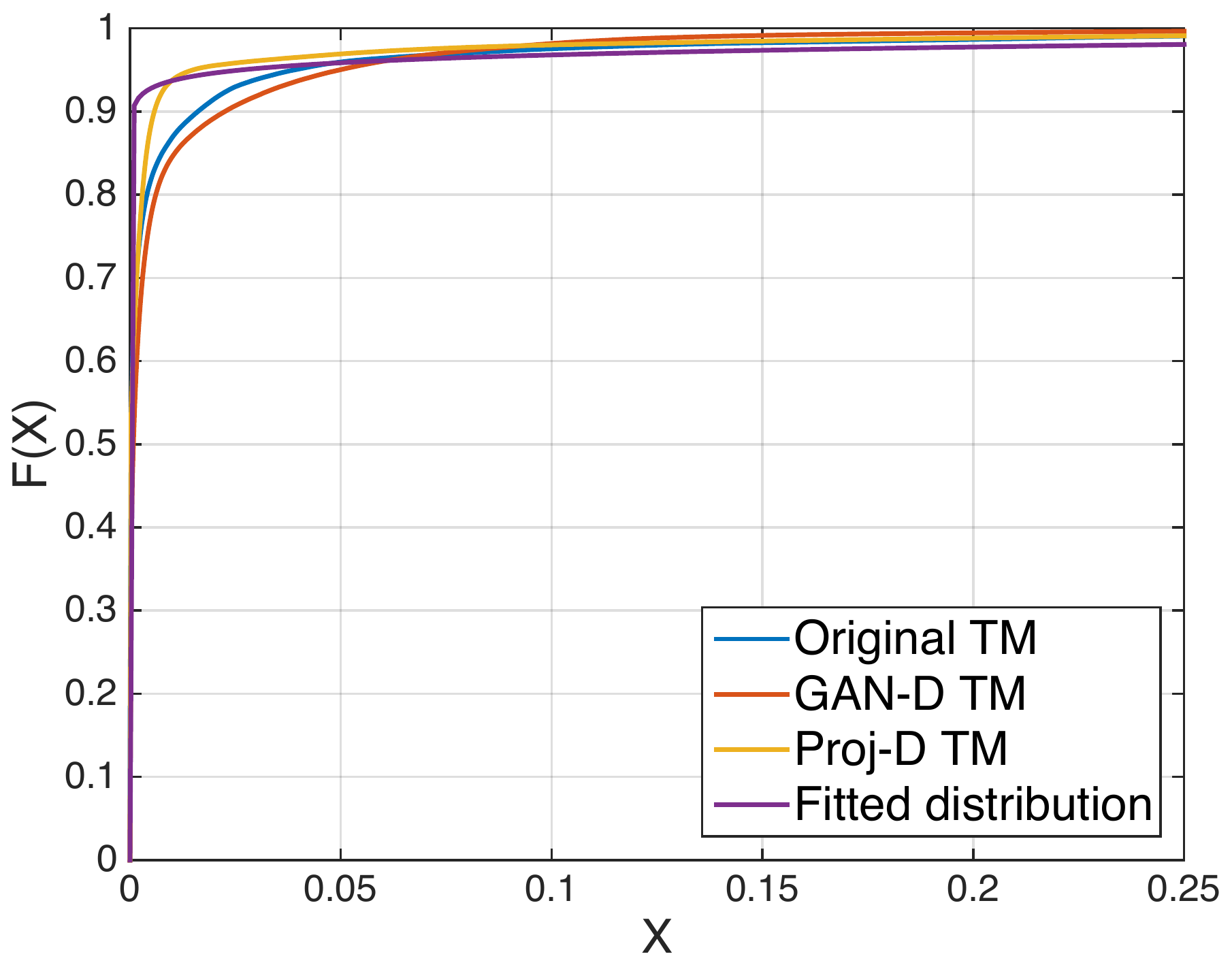}}
    \label{Fig:GT_CDF}
    \hfil
    \subfloat[Demands]{\includegraphics[width=.3\textwidth]{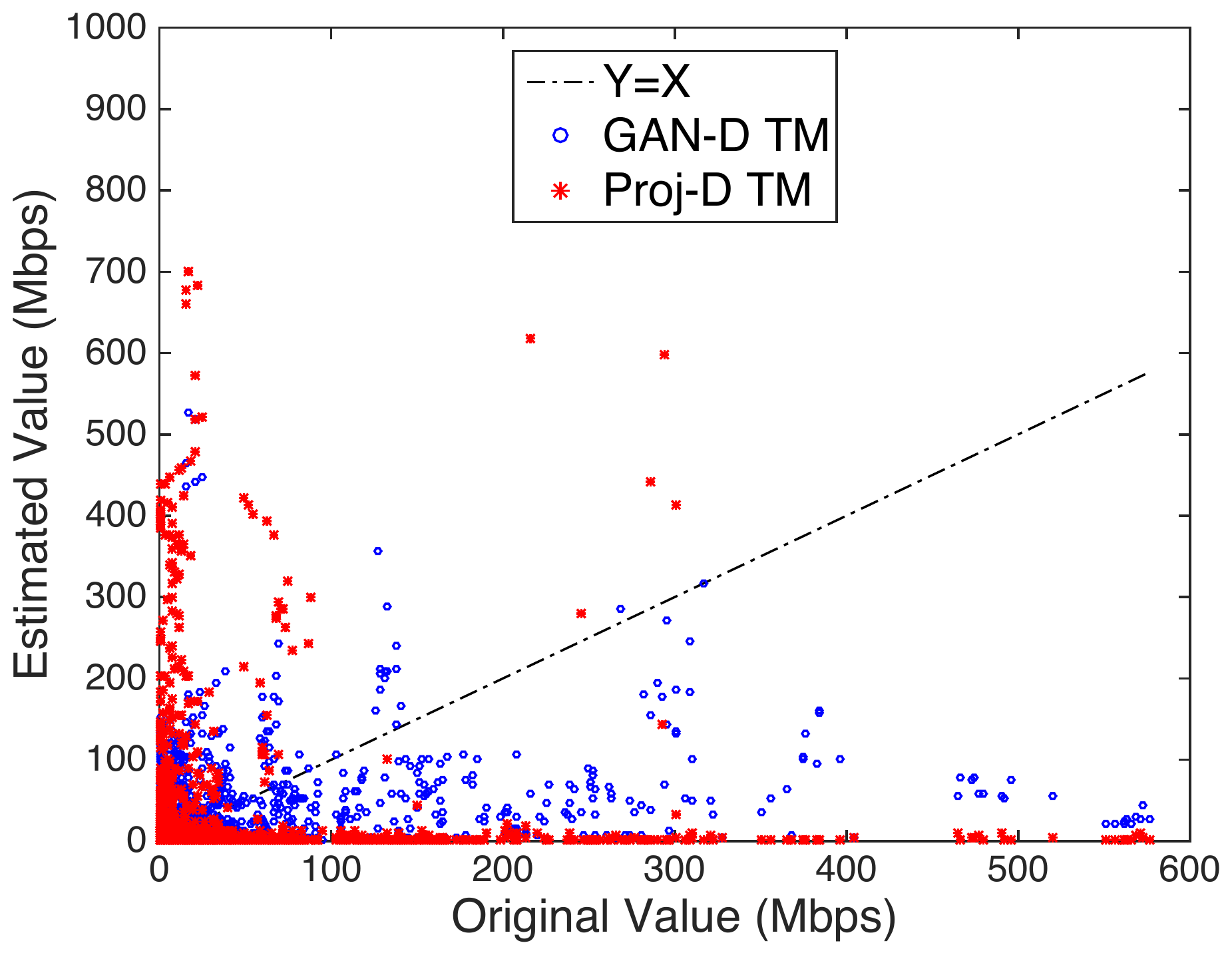}}
    \label{fig:GT_X}
    \hfil
    \subfloat[Link measurements]{\includegraphics[width=.3\textwidth]{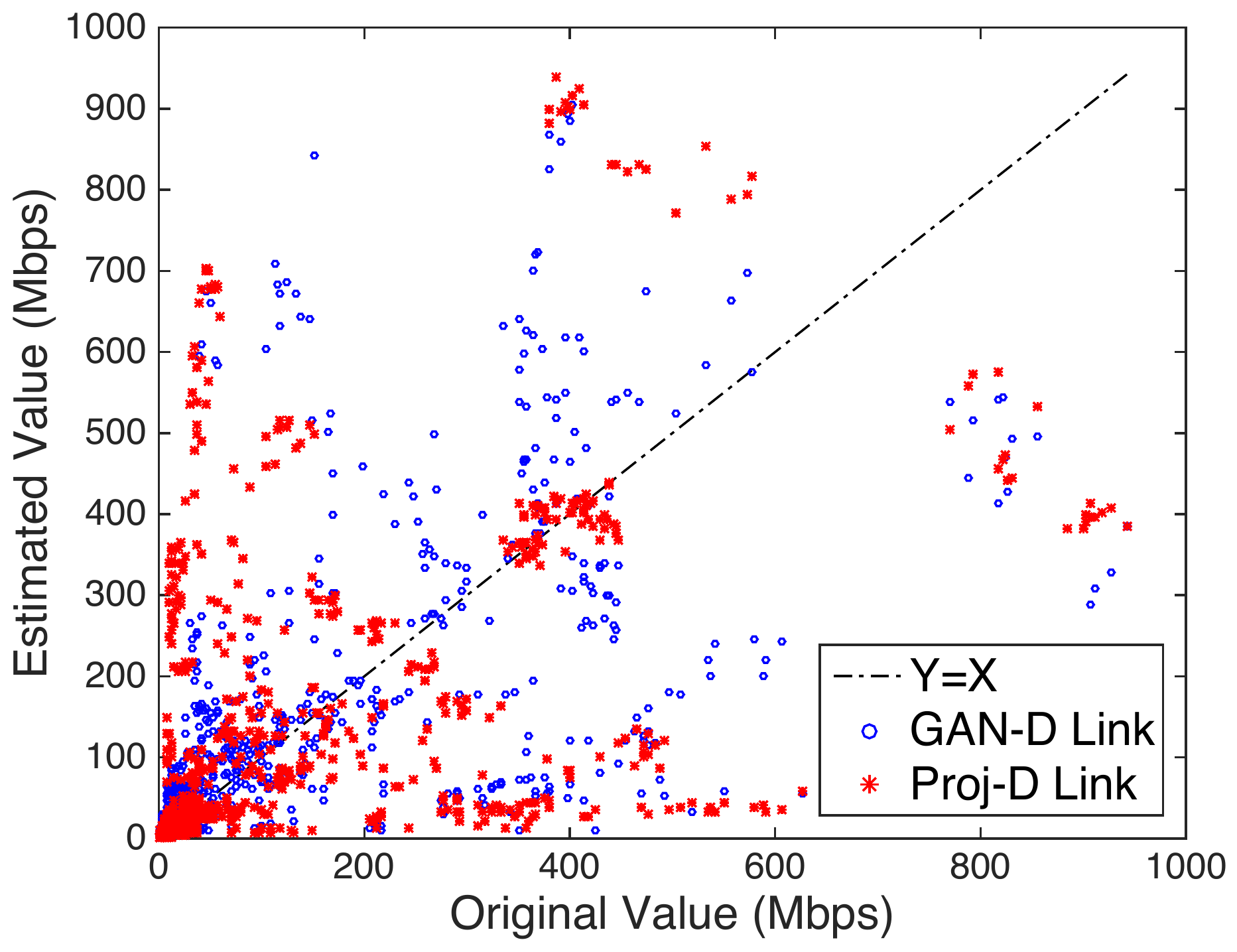}}
    \label{fig:GT_Y}
    \caption{Performance evaluation on the G\'EANT dataset}
\end{figure*}

Comparing results shown in Figure 3, for NET82, both the projection method and the GAN based method are able to provide estimation results with distributions that are similar to the fitted distribution and also the distribution of the original data. GAN-D is able to generate results that are closer to the original distribution. However Proj-D is able to generate data that fit better to the link measurement constraint, with the cost of diverting a bit from the given distribution constraint. Since the generator of the GAN is trained to generate data similar to the training set, GAN-D generates estimations that closely follow the given distribution, with the cost of worse fit of the link measurements. Though the GAN learns to generate samples according to the given distribution, it is not able to cover all possible space of the distribution, therefore GAN-D performs worse than Proj-D in terms of RMSE and NMAE.  

For the Abilene dataset, both methods are able to provide estimation results that closely meet the distribution constraint and link measurement constraints. Since the GAN is trained with real TMs measured from the past, it is able to learn the spatial correlations and other structural information of the TMs from the training data. Hence GAN-D is able to generate data that fits the distribution constraint better. For GAN-D the number of optimization steps $N_2$ also determines how well the results meet the link measurement constraints; with more optimization steps the results will fit the link measurement constraints better, but the elements of the estimated TMs will start to divert from the real value after certain number of steps. We perform the optimization process for 10000 steps, which generates results that can closely meet the link measurement constraints without too much over-fitting. For Proj-D the results can better meet the link measurement constraints, at the cost of diverting a bit from the distribution constraint. 

For the G\'EANT dataset, both Proj-D and GAN-D are able to generate results that fit the distribution constraint. This may be because that there are fewer links in this dataset so both methods can meet the distribution constraint without over-fitting. However, in terms of NMAE and RMSE the GAN based method still performs better than Proj-D. So the GAN is still able to learn spatial and structural information from the TMs used for training.

In addition to shortest path routing, Figure 6, 7 and 8 show the results with ECMP routing. Since the methods do not depend on any specific routing mechanism, they achieve similar performance with ECMP routing. 
\begin{figure*}[!t]
    
    \centering
    \subfloat[Empirical CDF]{\includegraphics[width=.3\textwidth]{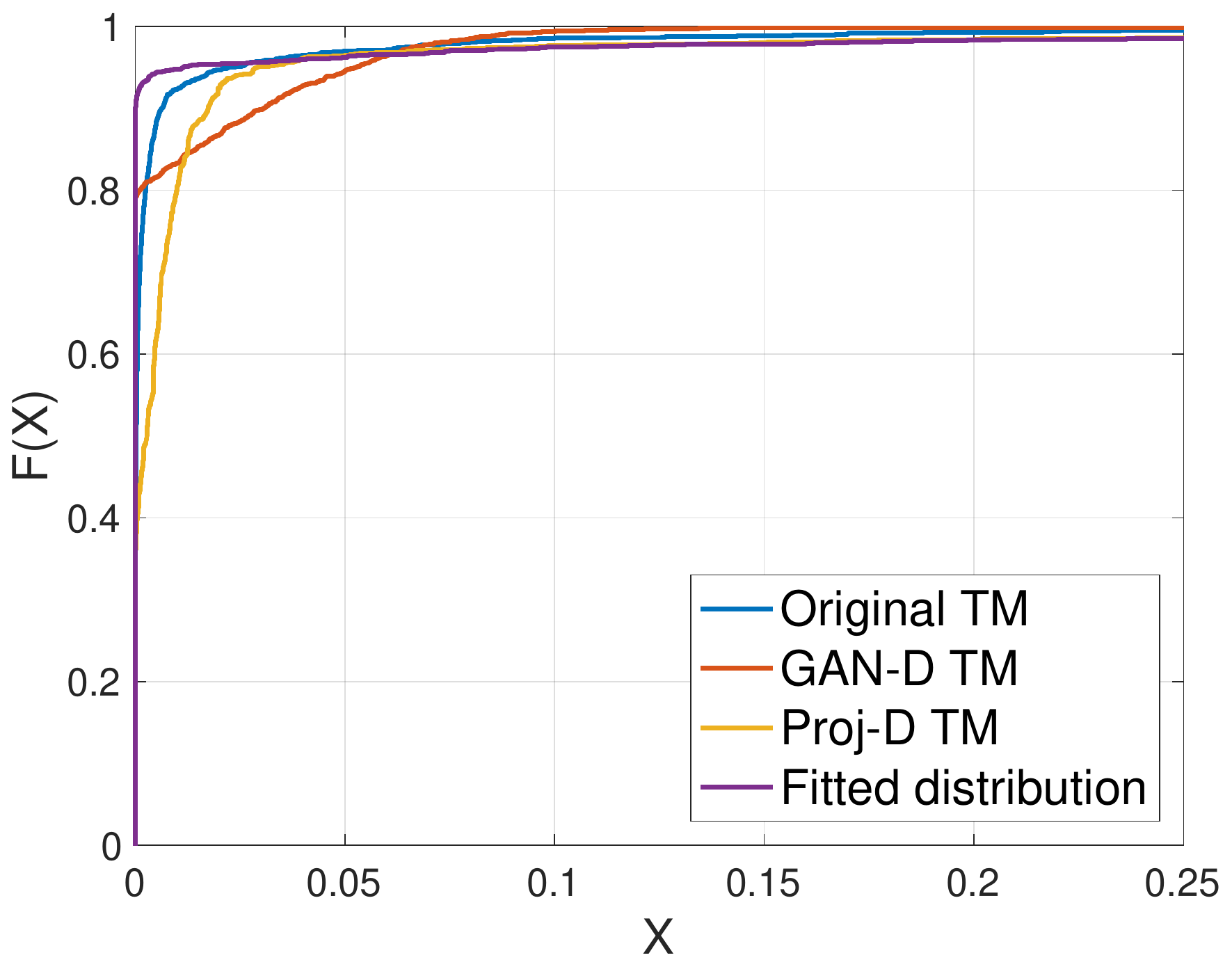}}
    \label{Fig:Traf_CDF_E}
    \hfil
    \subfloat[Demands]{\includegraphics[width=.3\textwidth]{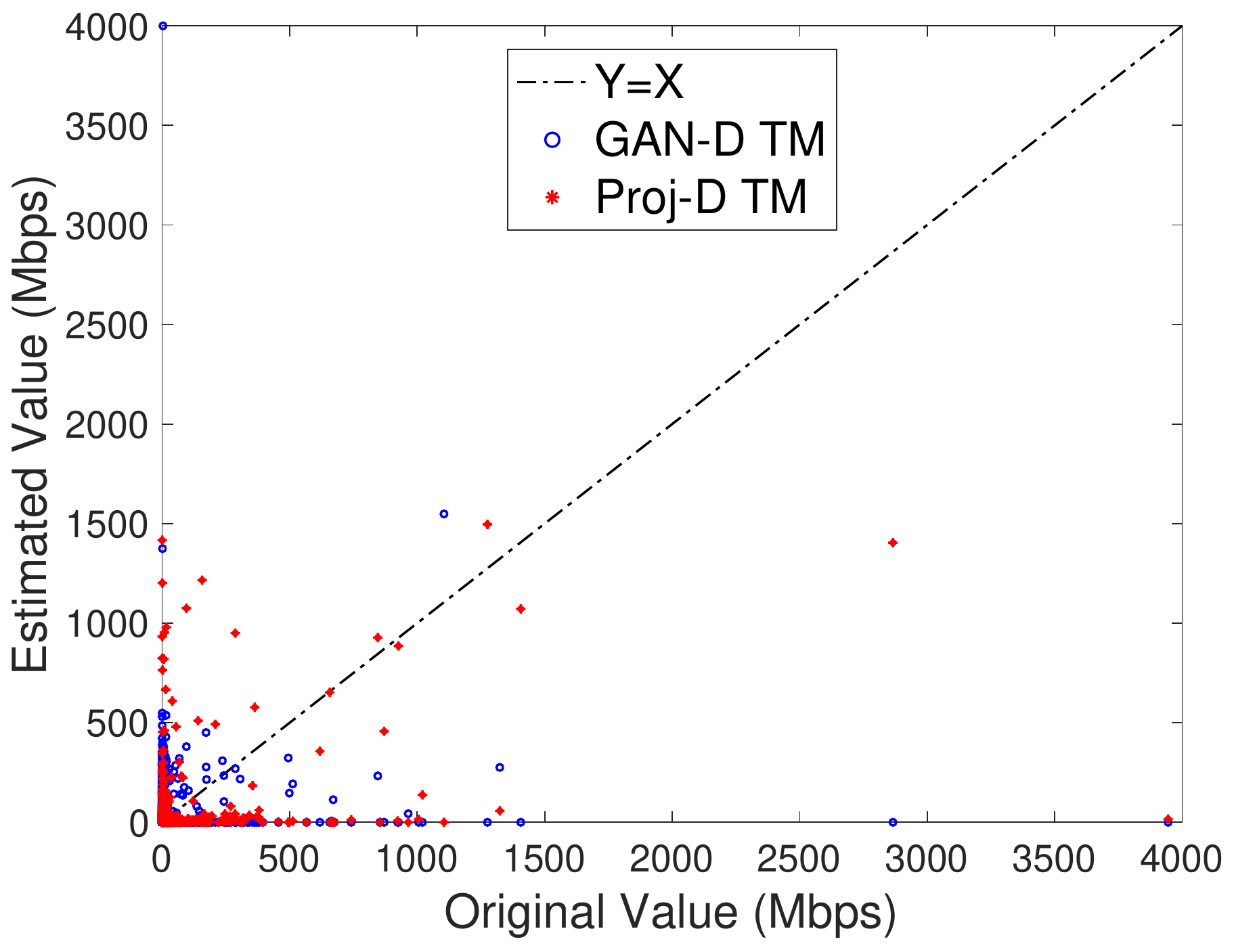}}
    \label{fig:Traf_X_E}
    \hfil
    \subfloat[Link measurements]{\includegraphics[width=.3\textwidth]{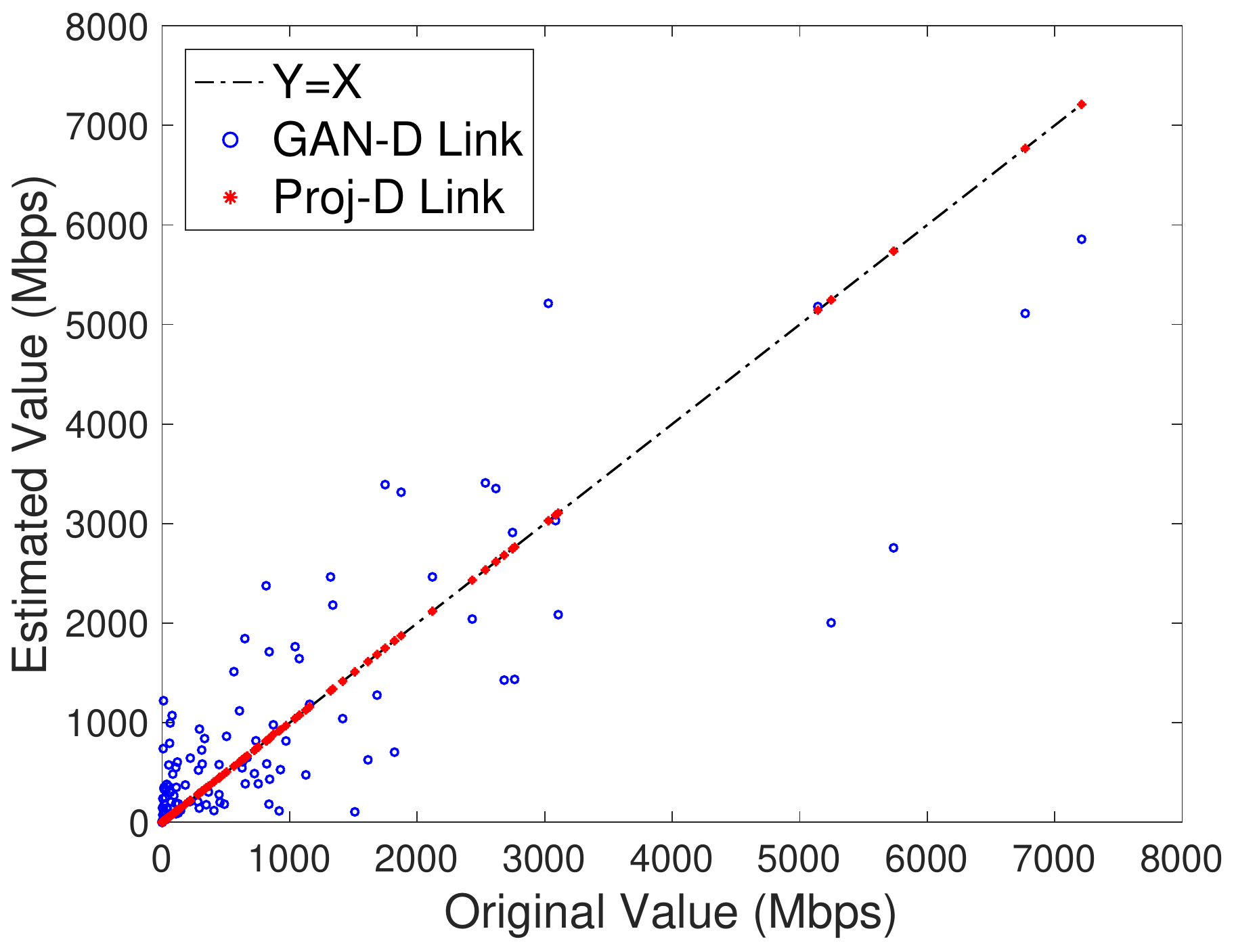}}
    \label{fig:Traf_Y_E}
    \caption{Performance evaluation on the NET82 dataset (ECMP)}
    
\end{figure*}

\begin{figure*}[!t]
    \centering
    \subfloat[Empirical CDF]{\includegraphics[width=.3\textwidth]{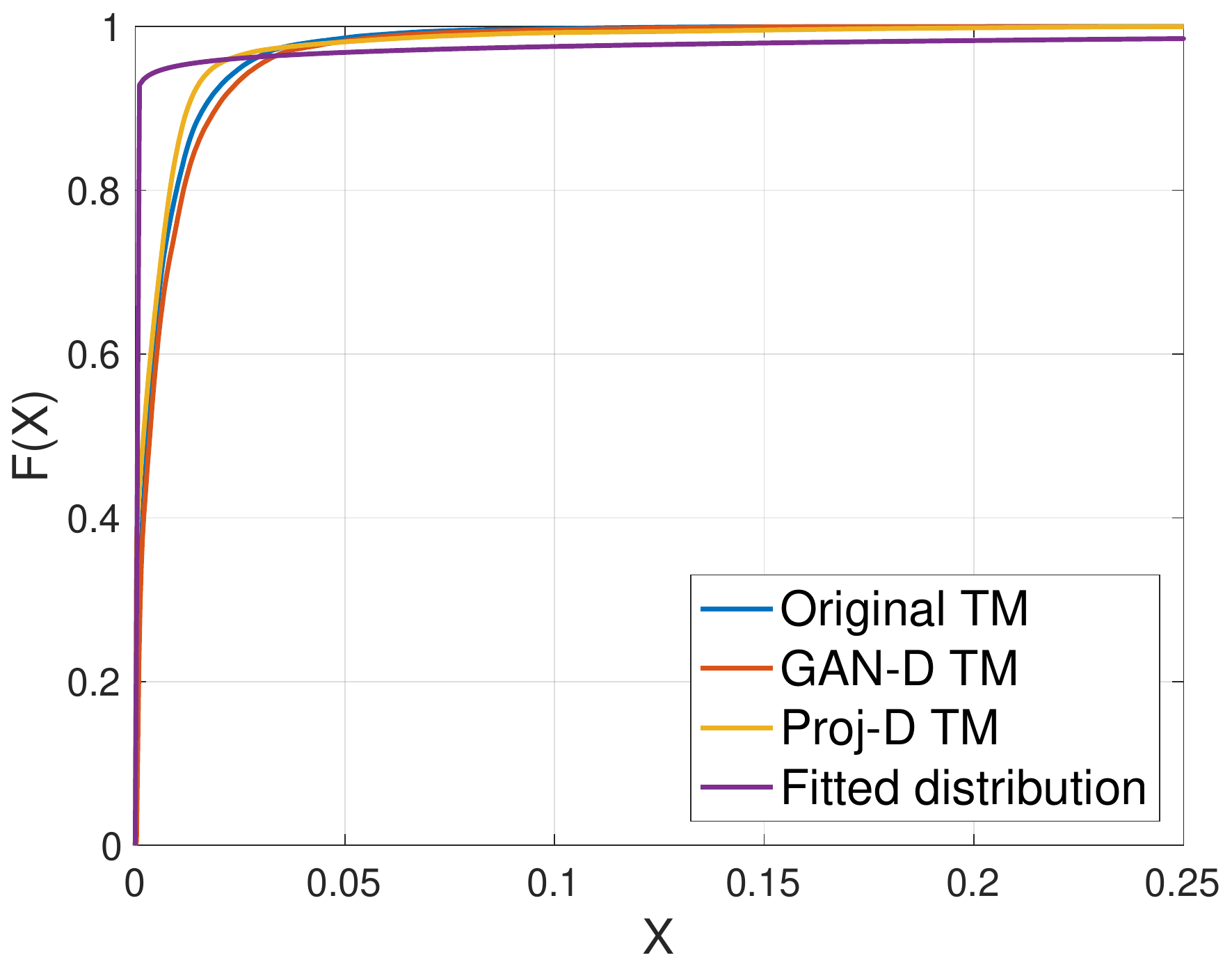}}
    \label{Fig:Ab_CDF_E}
    \hfil
    \subfloat[Demands]{\includegraphics[width=.3\textwidth]{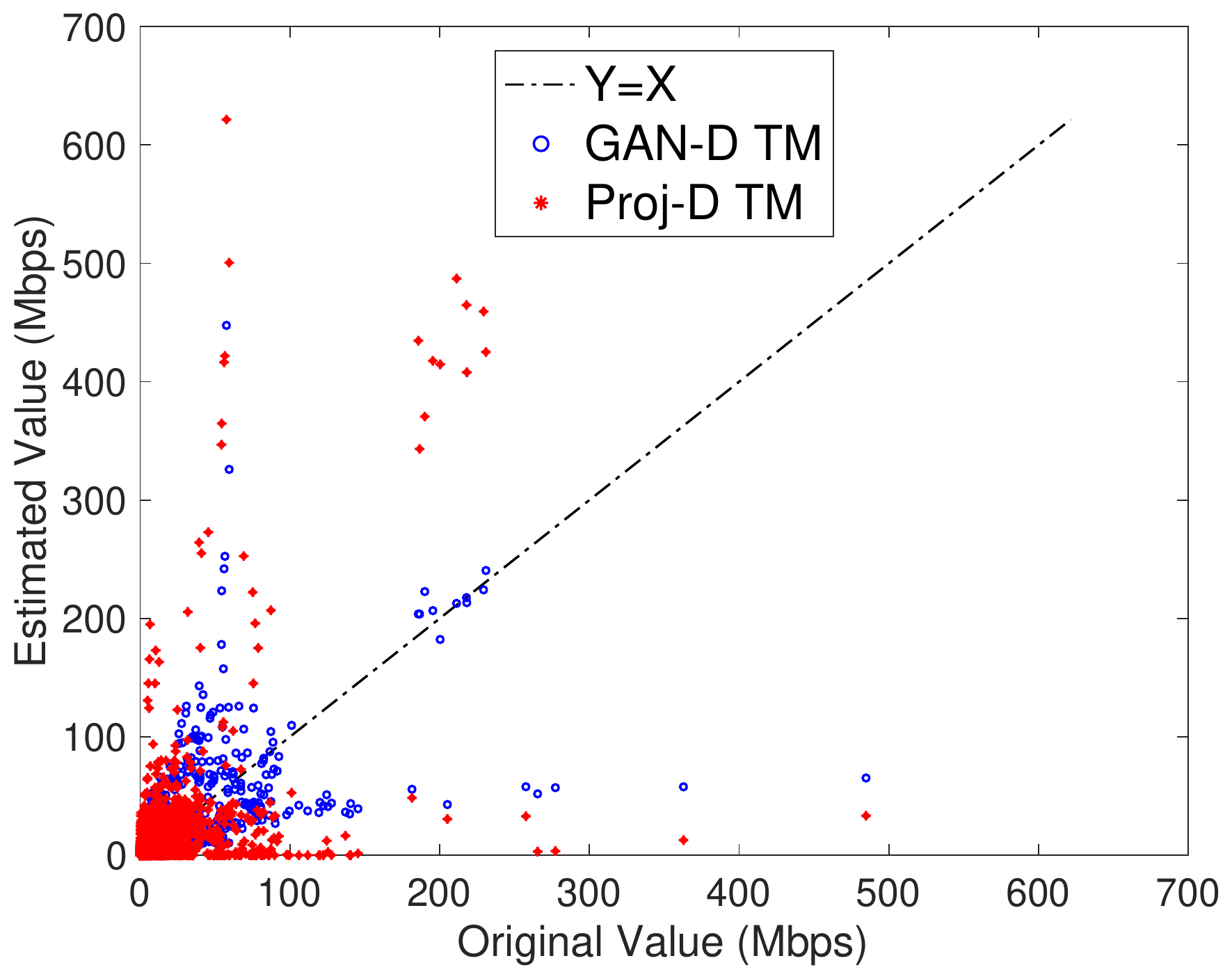}}
    \label{fig:Ab_X_E}
    \hfil
    \subfloat[Link measurements]{\includegraphics[width=.3\textwidth]{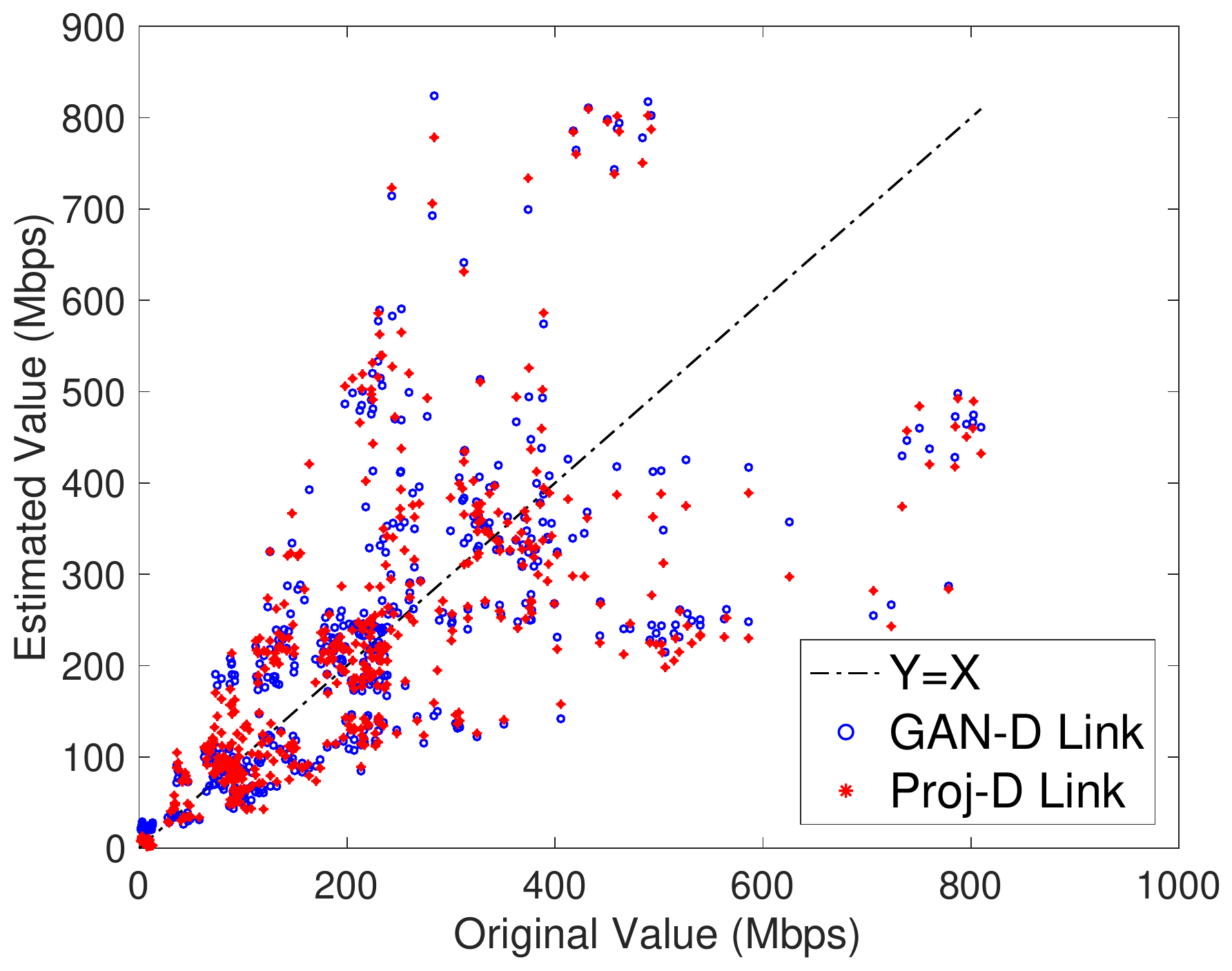}}
    \label{fig:Ab_Y_E}
    \caption{Performance evaluation on the Abilene dataset (ECMP)}
\end{figure*}

\begin{figure*}[!t]
    \centering
    \subfloat[Empirical CDF]{\includegraphics[width=.3\textwidth]{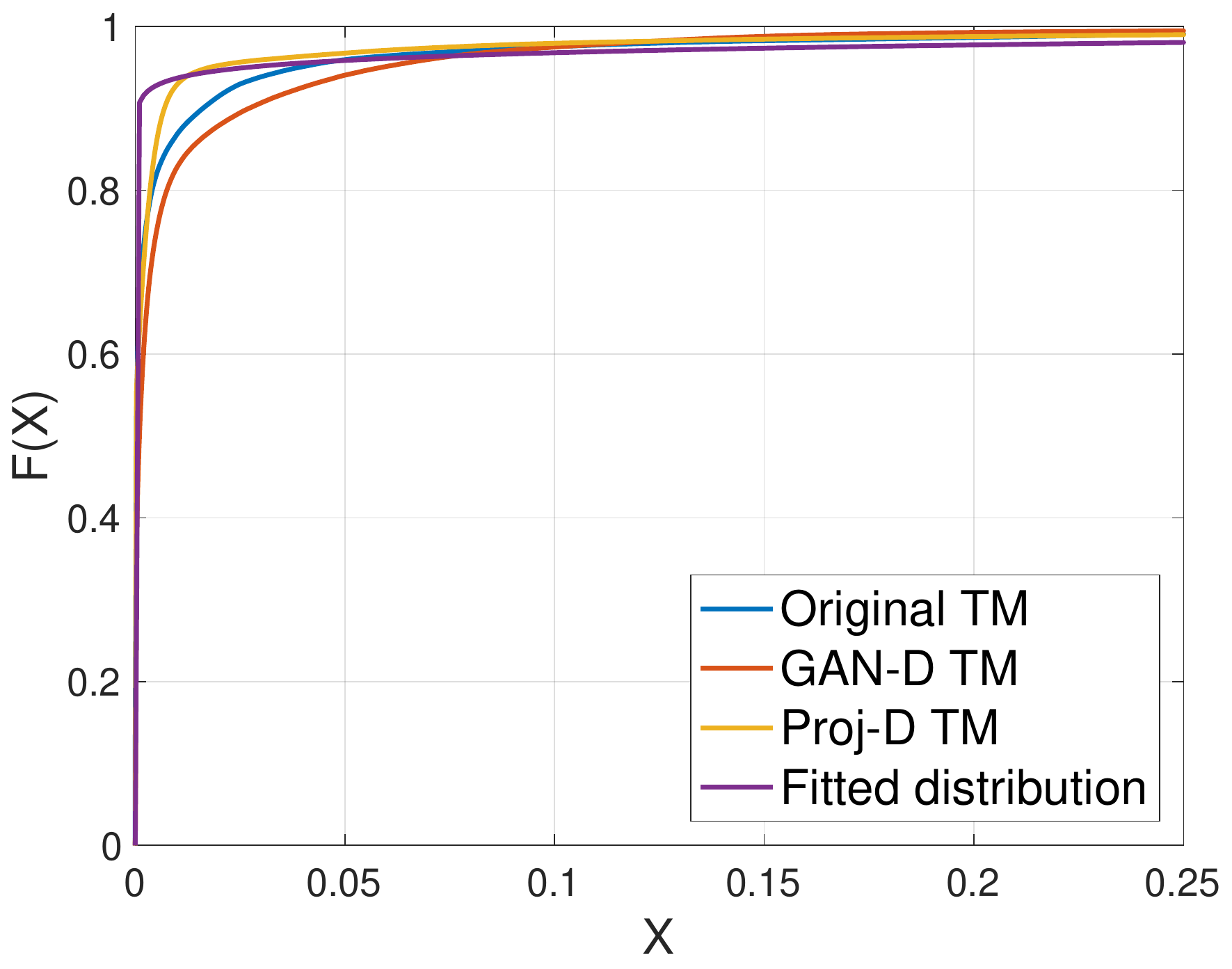}}
    \label{Fig:GT_CDF_E}
    \hfil
    \subfloat[Demands]{\includegraphics[width=.3\textwidth]{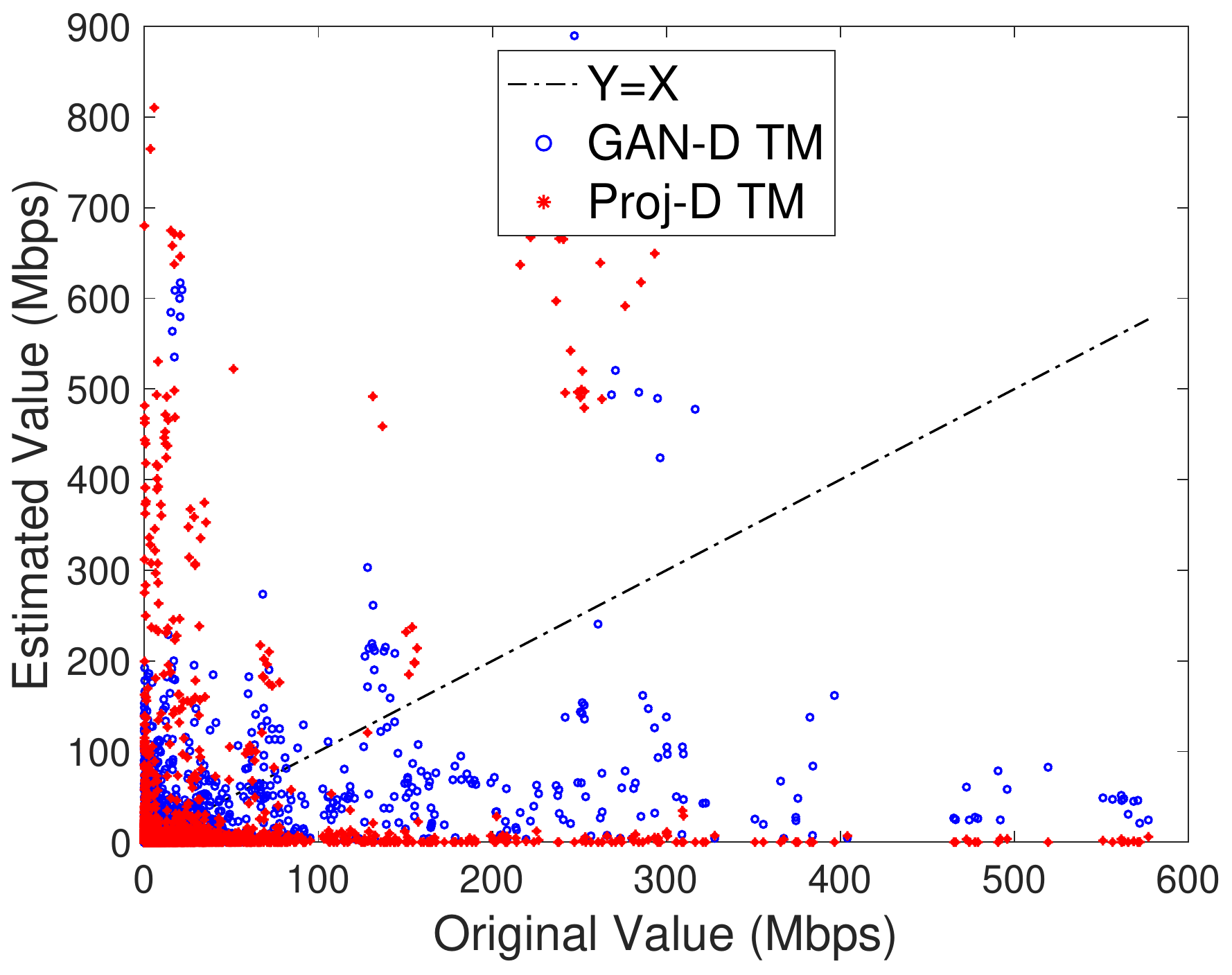}}
    \label{fig:GT_X_E}
    \hfil
    \subfloat[Link measurements]{\includegraphics[width=.3\textwidth]{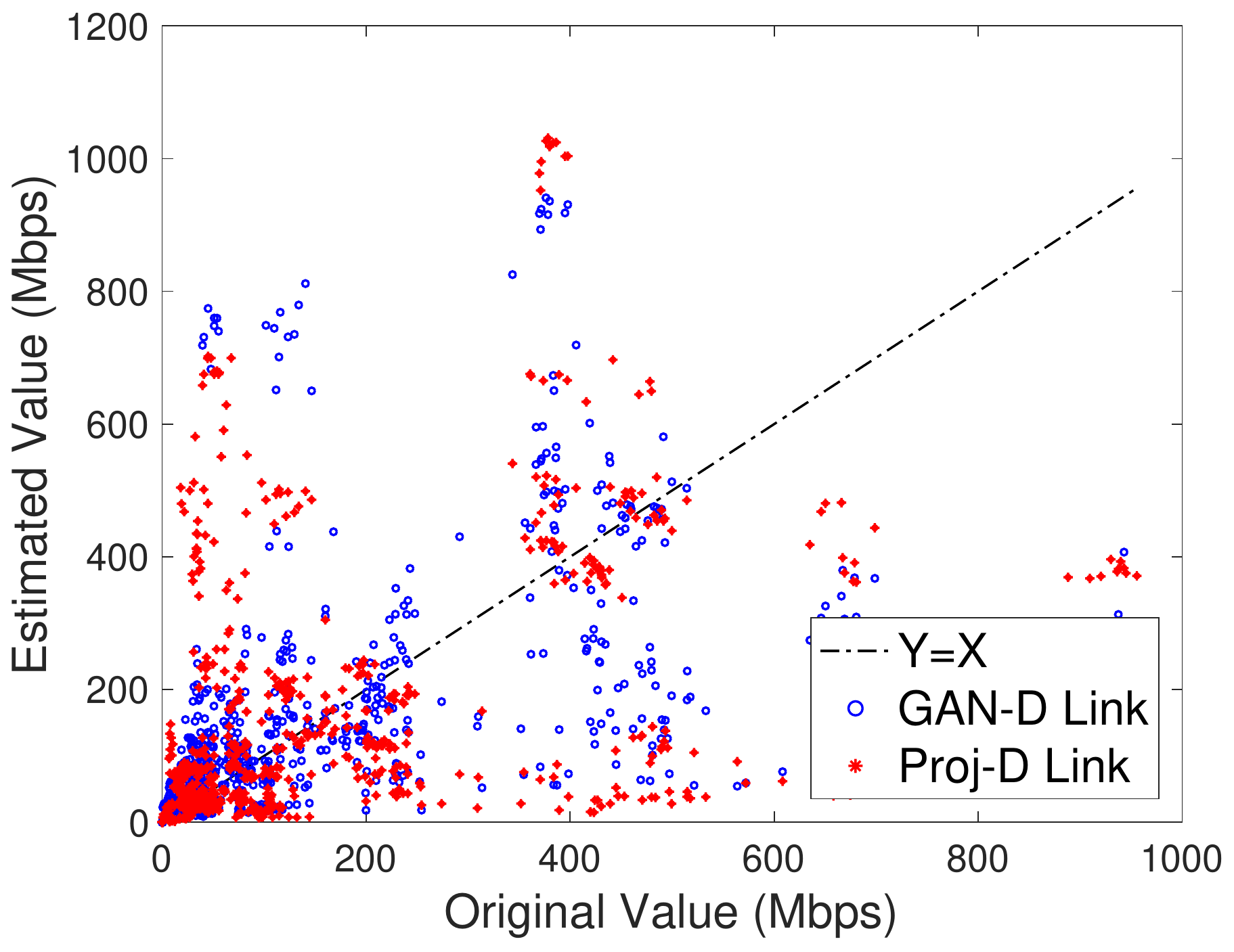}}
    \label{fig:GT_Y_E}
    \caption{Performance evaluation on the G\'EANT dataset (ECMP)}
\end{figure*}

In general, there exists the choice between meeting the distribution constraint better or meeting the link measurement constraint better. The GAN based method is able to provide estimation results that meets the distribution constraint better, without perfect fit of the link measurement constraints. The Projection based method is able to generate results that have almost exact fit of the link measurement constraints, but slightly diverge from the given distribution. The users can choose either one of the method based on their requirements in the specific use cases. 

\section{Conclusion and Future Work}\label{Sec:conclusion}
In this paper we proposed two methods for the problem of TM estimation given link measurements under a constraint of the distribution of demands. Experiment results show that both the method Proj-D and GAN-D are able to generate estimation results that fit the link measurements and the distribution constraint. The Projection based method is able to provide estimation results that fits the link constraints better, while the GAN based method generates TMs that better fit the given distribution. In addition, if TMs measured in the past are available, the GAN based method is able to learn the spatial and structural correlations of the TM data and provide better estimation results. \balance Future work includes extending these methods to other similar problems and finding the suitable kind of GAN for the GAN based method.

\bibliographystyle{IEEEtran}
\bibliography{Citation}
\end{document}